\journal{Journal of Theoretical Biology}
\newtheorem{thm}{Theorem}[section]
\newtheorem{cor}[thm]{Corollary}
\newtheorem{exmp}[thm]{Example}
\newtheorem{rem}[thm]{Remark}
\newtheorem{defn}[thm]{Definition}
\begin{document}

\begin{frontmatter}

\title{Identifiability of the unrooted species tree topology under the coalescent model with time-reversible substitution processes, site-specific rate variation, and invariable sites}

\author{Julia Chifman}
\address{Department of Cancer Biology, Wake Forest School of Medicine, Winston-Salem, NC, 27157}
\ead{jchifman@wakehealth.edu}

\author{Laura Kubatko\corref{cor1}}
\ead{lkubatko@stat.osu.edu}
\address{Department of Statistics, Department of Evolution, Ecology, and Organismal Biology,
Mathematical Biosciences Institute, The Ohio State University, Columbus, OH 43210}

\cortext[cor1]{Corresponding author}

\begin{abstract}
The inference of the evolutionary history of a collection of organisms is a problem of fundamental importance in evolutionary biology. The abundance of DNA sequence data arising from genome sequencing projects has led to significant challenges in the inference of these phylogenetic relationships. Among these challenges is the inference of the evolutionary history of a collection of species based on sequence information from several distinct genes sampled throughout the genome. It is widely accepted that each individual gene has its own phylogeny, which may not agree with the {\em species tree.} Many possible causes of this gene tree incongruence are known. The best studied is incomplete lineage sorting, which is commonly modeled by the coalescent process. Numerous methods based on the coalescent process have been proposed for estimation of the phylogenetic species tree given DNA sequence data.  However, use of these methods assumes that the phylogenetic species tree can be identified from DNA sequence data at the leaves of the tree, although this has not been formally established. {\em We prove that the unrooted topology of the $n$-leaf phylogenetic species tree is generically identifiable given observed data at the leaves of the tree that are assumed to have arisen from the coalescent process under a time-reversible substitution process with the possibility of site-specific rate variation modeled by the discrete gamma distribution and a proportion of invariable sites.}
\end{abstract}

\begin{keyword}
Phylogenetics; Identifiability; Algebraic Statistics.
\end{keyword}

\end{frontmatter}


\section{Introduction}

The field of evolutionary genetics has benefitted enormously from recent advances in sequencing technology that have led to the availability of DNA sequence information for hundreds or thousands of species.  These data are commonly used to study evolutionary patterns and processes. A fundamental problem in this area is the inference of a {\itshape phylogenetic species tree} that describes the evolutionary relationships among a collection of species for which data have been collected. A species phylogeny is a tree with all internal nodes of degree three, except for a root node of degree two. All edges are treated as directed from the root toward the leaves. The tree represents the biological process of speciation, in which one population splits into two populations which then evolve independently, with no subsequent exchange of genetic material. The root node represents the most ancestral population to all sampled species, while the leaves (also called {\itshape taxa}; singular: {\itshape taxon}) represent present-day populations. An example of a phylogenetic species tree for four species, $a$, $b$, $c$, and $d$,  is shown by the outlined tree in Figure \ref{hist}.

Although the goal is generally to estimate the species phylogeny from available DNA sequence data, these sequence data are only directly informative about the {\itshape gene tree} -- the phylogenetic tree underlying the gene for which the DNA sequences are available. It is well-accepted that gene trees and species trees may not agree with one another (see, e.g., \cite{maddison1997,pamilonei1988}), with many evolutionary processes known to give rise to variability in gene phylogenies within a fixed species phylogeny.  Examples of such processes are incomplete lineage sorting (ILS), hybridization, horizontal gene transfer, and gene duplication and loss \cite{maddison1997}. The best studied of these processes is ILS, which results when two lineages fail to share a most recent common ancestor (MRCA; represented by an internal node in the gene tree) until further back in time than the immediately ancestral population. For example, in Figure \ref{histA}, the gene tree embedded within the species tree represents the phylogenetic history of the lineages sampled from species $a$, $b$, $c$, and $d$, which are denoted by $A$, $B$, $C$, and $D$, respectively. Throughout the text, we use uppercase letters to refer to gene lineages, and the corresponding lowercase letters to refer to the species from which these lineages are sampled. Although it is possible for lineages $C$ and $D$ to share their most recent common ancestor in the population labeled $P_1$, they remain distinct in this population, and instead share their MRCA in population $P_3$, thus providing an example of ILS.  Note that the topology (branching pattern) of the gene tree in Figure \ref{histA} matches the topology of the species tree. Figure \ref{histB} gives another example of ILS, but in this case lineage $A$ coalesces with lineage $C$ in their ancestral population, and the gene tree topology does not match the species tree topology.

One of the reasons that ILS has been well-studied is that it can be modeled by the coalescent process.  The coalescent process can be derived as the large sample limit (as the population size goes to $\infty$) of the Wright-Fisher and other common population genetics models \cite{kingman1982a,kingman1982b,tavare1984}.  The key property of the coalescent model is that the waiting time back into the past for  a pair of lineages to find their MRCA follows an exponential distribution, with a parameter that depends on the sample size.  The coalescent model thus provides a link between the phylogenetic species tree and the set of gene trees embedded within the species tree that give rise to the actual data. For this reason, numerous methods based on the coalescent process have recently been proposed for estimation of the phylogenetic species tree.  One group of methods (e.g., BEST \cite{liupearl2007}, *BEAST \cite{heleddrummond2010}, STEM \cite{kubatkoetal2009}, MP-EST \cite{liuetal2010}) assumes that multi-locus data are available for inference, with the assumption that each locus has a single underlying (unobserved) gene tree.  Alternatively, single nucleotide polymorphism (SNP) data are sometimes used for inference. SNP data represent sites sampled throughout the genome that are known to be variable, with the assumptions that the sites are unlinked and that they each have their own phylogenetic history. The software package SNAPP \cite{bryantetal2012} has recently been developed for species tree estimation from biallelic SNP data.  Use of any of these methods assumes that the phylogenetic species tree can be identified from DNA sequence data at the leaves of the tree, but this has not formally been established  (note, however, that Allman et al. (2011)  \cite{allmanetal2011} have established identifiability given  a collection of gene tree topologies; Allman et al. (2011) \cite{allmanetal2011b} have considered identifiability given clade probabilities; and Liu and Edwards (2009) \cite{liuedwards2009} have established identifiability when the order of ancestral populations, and hence the relationships among all rooted triples, can be consistently estimated).  

Here, we prove that the unrooted topology of the phylogenetic species tree is identifiable given observed SNP data at the leaves of the tree that are assumed to have arisen from the coalescent process. Our results hold for data for which a single observation corresponds to recording which of $\kappa$ possible states occurs at each leaf. These data are modeled by a continuous-time Markov process that specifies the rates of transitions between states along the phylogeny and that satisfies the condition of time-reversibility. We also consider models that allow rate variation across sites as modeled by the discrete gamma distribution, as well as the possibility of invariable sites. For the special case of DNA sequence data, there are four states (i.e., $\kappa=4$) corresponding to the four nucleotides $A$, $C$, $G$, and $T$.  In this case, our results hold for the General Time Reversible (GTR; \cite{tavare1986}) model with discrete-gamma distributed rate variation and a proportion of invariable sites, and all associated sub-models.

In the next section, we give the necessary background on the coalescent process and on the process of mutation for general $\kappa$-state models, pointing out the application to DNA sequence data where relevant. We also examine some common modifications to site-independent sequence substitution models to allow for variation in the rate of evolution across sites.  We then present our main results and show how they are used to establish identifiability in the general case. Based on these results, we propose a method for inferring species-level relationships for empirical data sets consisting of DNA sequence data. We conclude by suggesting extensions of our current work.

\section{Background}  

In this section, we review the models used for both the coalescent process and the mutation process along a phylogenetic tree.  By a {\em topological tree} we mean a tree for which edge length are not specified. If a tree, rooted or unrooted, is endowed with a collection of non-negative edge length, we will say that a tree is {\em metric}. 

Let $S$ denote a rooted, binary, topological $n$-taxon {\em species} tree and let $\boldsymbol{\tau} = (\tau_1, \tau_2, \ldots , \tau_{n-1})$ be a vector of speciation times (looking backward in time), $0 < \tau_j < \infty$. Then  a pair $(S, \boldsymbol{\tau})$ represents a phylogenetic rooted metric $n$-leaf species tree. Similarly, let $G$ be a rooted, binary, $n$-taxon topological {\em gene} tree and let $\mathbf{t} = (t_1, t_2, \ldots , t_{n-1})$ be a vector of coalescent times, where $t_j$ is the time from the $j^{th}$ coalescent event to the next speciation event (looking forward in time), $0 < t_j < \infty$, for $j=1, 2, \ldots , n-1$. Figure \ref{hist} shows examples of gene trees nested within species trees with all of these quantities labeled. In addition, observe that in Figure \ref{histA} (upper left image) the coalescent event that happens in population $P_2$ could also happen in population $P_3$. Thus, a pair  $(G, \mathbf{t})$ will represent a phylogenetic rooted metric $n$-leaf gene tree along with coalescent times that are associated with specific intervals between speciation events on the species tree. The set of all such pairs, $(G, \mathbf{t})$, conditional on the species tree $(S, \boldsymbol{\tau})$ will be denoted by $\mathcal{G}_S$. Notice that both $(S, \boldsymbol{\tau})$ and $(G, \mathbf{t})$  are ultrametric, that is, all leaves are equidistant from the root. In phylogenetics such trees are said to satisfy the {\em molecular clock}. Here we measure time in {\itshape coalescent units}, which are the number of generations scaled by $2N\mu$, where $N$ is the population size and $\mu$ is the mutation rate.

\begin{figure}
\centering
\subfloat[]{\label{histA}\includegraphics[scale=0.4]{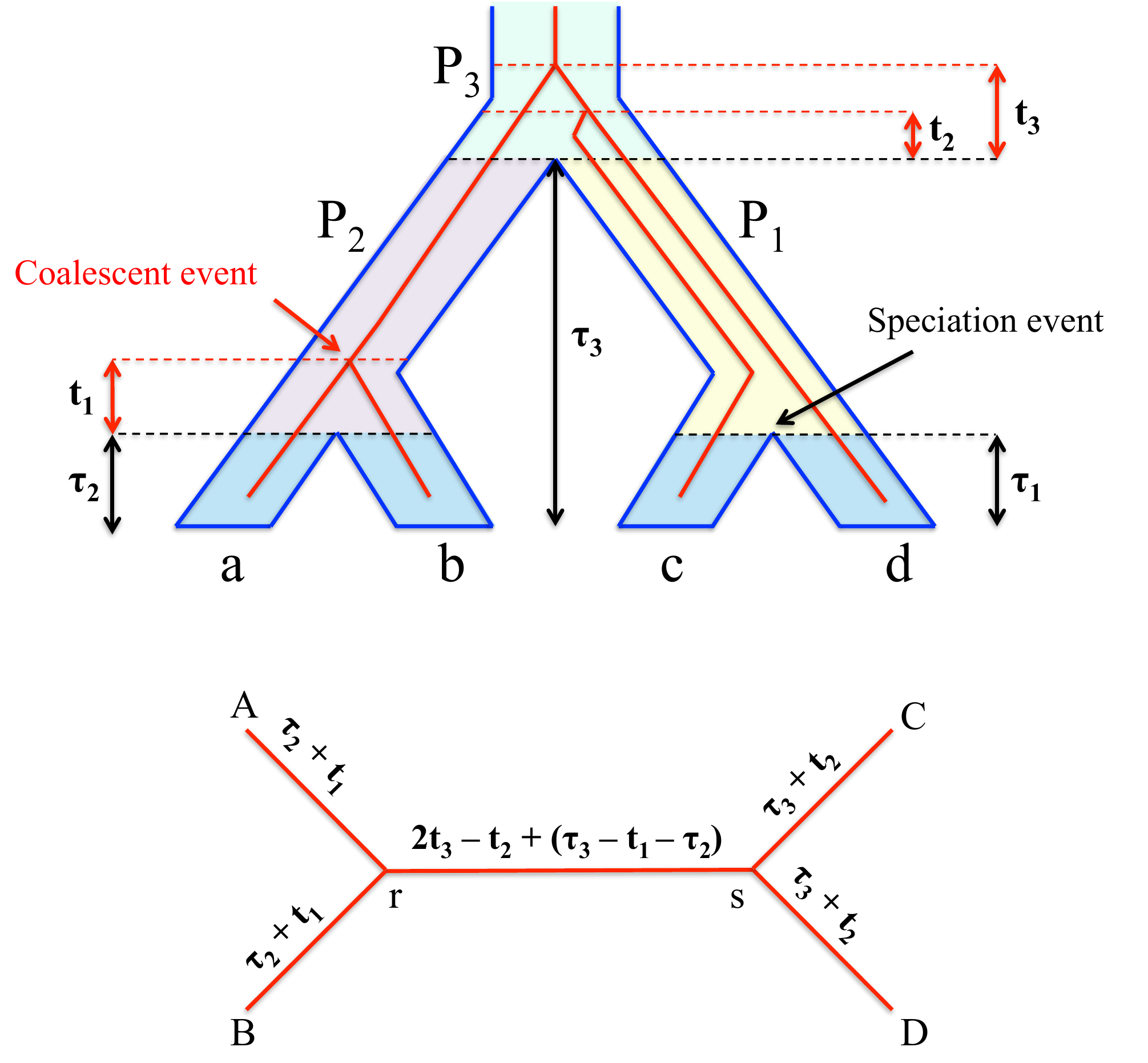}}
\hspace{0.3in}
\subfloat[]{\label{histB}\includegraphics[scale=0.4]{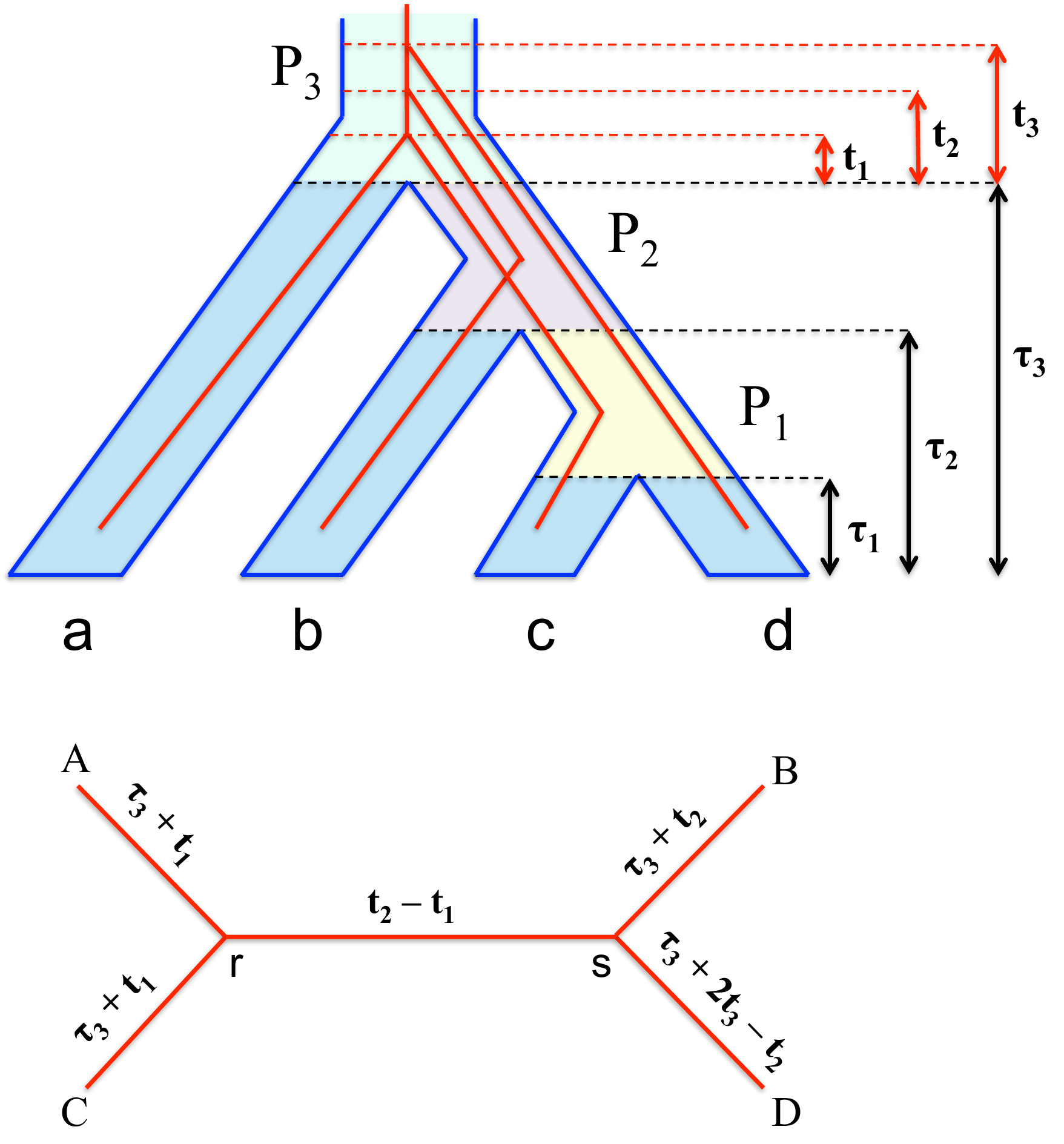}}
\caption{Example of two gene trees (red) nested within a species tree (blue). In (a) the gene tree and the species tree have the same topology, while in (b) they do not agree with one another. In both subfigures, the dotted black horizontal lines represent speciation events in the species tree.  The times of these events, denoted by $\tau_i$, are measured backward from the present time.  Portions of the tree that fall between species events represent ancestral populations, denoted by $P_i$. The dotted horizontal red lines represent coalescent events that occur in the gene trees.  The times of coalescent events, denoted by $t_i$, are measured from the most recent (looking forward in time) speciation event. }
\label{hist}
\end{figure}

\subsection{The Coalescent Process}

The coalescent is a model for the evolutionary history (i.e., sequence of coalescent events) of a sample of lineages within a population backward in time from the present to the past \cite{kingman1982a,kingman1982b,tavare1984}. In particular, given a sample of $j$ gene lineages, the coalescent model specifies that the time $t_j$ until the next pair of lineages coalesces follows an exponential distribution with rate $\left( \binom{j}{2}\frac{2}{\theta} \right)^{-1}$, i.e. $t_j$ has probability density,
\begin{equation}\label{eq:exp.density} 
f(t_j) = \frac{j(j-1)}{2}\frac{2}{\theta} \mbox{exp}\left(-\frac{j(j-1)}{2} \frac{2}{\theta}t_j\right), \mbox{       } t_j > 0,
\end{equation}
where the parameter $\theta$ is the effective population size, which is given by $\theta = 2N\mu$.  

Now consider a population within a species tree, e.g., $P_1$ in Figure \ref{histA}, which corresponds to a time interval of length $\tau = \tau_3- \tau_1$, and let $b$ represent this branch. Following Rannala and Yang (2003), \nocite{rannalayang2003} let $u$ denote the number of lineages ``entering'' the population (e.g., the number of lineages in the population closest to the present time) and let $v$ be the number of lineages ``leaving'' the population, $v \leq u$.  The number of coalescent events within the population is $u-v$, and the density of the time of each coalescent event can be determined via Equation \ref{eq:exp.density} by setting $j$ to be the number of current lineages in the population and following an assumption of the coalescent model that each pair of lineages within a population is equally likely to be the next to coalesce.  This means that when there are $j$ lineages available to coalesce in the population, there are $\binom{j}{2}$ possible pairs that might be the next to coalesce, and the density for the next event should be weighted by the probability of a particular pair coalescing, which is $1/\binom{j}{2}$.  Let $t^b_j$ denote the time from the speciation event that immediately precedes branch $b$ (looking backward in time) to the coalescent event that reduces the number of lineages on branch $b$ from $j$ to $j-1$. Define $t^b_{u+1}$ to be 0, and let $\tau_{b}$ refer to the length of branch $b$.
Then, we can write the joint density of coalescent times $t^b_u, t^b_{u-1}, \ldots , t^b_{v+1}$ within population $P$ on branch $b$ in the case in which $u>v$ as
\begin{eqnarray}\label{eq:joint.density.one.pop}
 \nonumber f_{P_b}(t^b_u, t^b_{u-1}, \ldots , t^b_{v+1})  &  = & \prod_{j=v+1}^{u} \left[ \frac{2}{\theta} \mbox{exp} \left(-\frac{j(j-1)}{\theta}(t^b_j - t^b_{j+1})\right) \right] \\ 
 &  \times &  \mbox{exp} \left(-\frac{v(v-1)}{\theta}(\tau_{b} - t^b_{v+1} ) \right),\\ 
\nonumber & & 0 < t^b_j < \infty, \  j = u+1, u, u-1, \ldots , v+1. 
\end{eqnarray}
The terms inside the product in Equation \ref{eq:joint.density.one.pop} correspond to observed coalescent events, while the final term reflects the fact that when $v \neq 1$, the coalescent event that decreases the number of lineages from $v$ to $v-1$ does not occur within the time remaining in population $P_b$, which is $\tau_{b} - t^b_{v+1}$.  

When $u=v$ for branch $b$, no coalescent events happen on that branch.  The probability of this occurring under the model is 
\begin{equation}\label{eq:prob.no.coal}
P(\mbox{no coalescence among $v$ lineages in population $P_b$}) = \mbox{exp} \left(-\frac{v(v-1)}{\theta}\tau_{b} \right),
\end{equation}
where $\tau_{b}$ is defined as above to be the length of branch $b$.  Note that when $u=v=1$, as is the case for the branches at the tips of the tree, this probability is 1.

As an example, refer again to Figure \ref{histA}.  For population $P_1$, we have $u=v=2$ and the probability of this is given by Equation \ref{eq:prob.no.coal} as 
\begin{equation}
P(\mbox{no coalescence among 2 lineages in population $P_1$}) = \mbox{exp}\left(-\frac{2}{\theta}(\tau_{3} - \tau_{1}) \right).
\end{equation}
For population $P_3$, $u=3$, $v=1$, $t^{P_3}_{u} = t_2$, $t^{P_3}_{u-1} = t^{P_3}_{v+1} = t_3$, and the joint density is 
\begin{equation}
f_{P_3}(t^{P_3}_u, t^{P_3}_{v+1}) = 
f_{P_3}(t_2, t_3)  =
\left\{ \frac{2}{\theta}\mbox{exp}\left(-\frac{2}{\theta}(t_3-t_2)\right) \right\} \left\{\frac{2}{\theta}\mbox{exp} \left( -\frac{6}{\theta}t_2\right) \right\}.
\end{equation}
 For population $P_2$, $u=2$, $v=1$, and $t^{P_2}_{u} = t^{P_2}_{v+1} = t_1$, and the joint density is 
 \begin{equation}
f_{P_2}(t^{P_2}_u)  =  
f_{P_2}(t_1)  =  
\frac{2}{\theta}\mbox{exp} \left( -\frac{2}{\theta}t_1\right).
\end{equation}

Equation \ref{eq:joint.density.one.pop} and Equation \ref{eq:prob.no.coal} describe the coalescent process {\itshape within} a population.  We now wish to apply the coalescent model to the entire phylogenetic species tree in order to derive the probability density for gene trees nested within the species tree. Again following Rannala and Yang (2003), \nocite{rannalayang2003} we note that once the number of lineages entering and leaving a population on a species phylogeny is specified, the coalescent processes within each of the populations are conditionally independent of one another.  Thus, the densities within individual populations can be multiplied to give the overall gene tree density given a particular species tree with speciation time vector $\boldsymbol{\tau}$,
\begin{equation}\label{gtd}
f((G,\mathbf{t})|(S, \boldsymbol{\tau}))= \prod_{b=1}^{n-1} f_{P_b} (t^b_{u_b}, t^b_{u_{b}-1}, \ldots , t^b_{v_b+1}),
\end{equation}
where the index $b$ is over populations (e.g., branches) in the species phylogeny $(S,\boldsymbol{\tau})$, $u_b$ is the number of lineages entering population $P_b$, and $v_b$ is the number of lineages leaving population $P_b$. Note that the collection of $t^b_i$ terms across all branches is equivalent to the vector $\mathbf{t}$ in  $(G,\mathbf{t})$. 
We emphasize that the notation $f((G,\mathbf{t})|(S, \boldsymbol{\tau}))$ for the joint density is chosen to reflect the fact that this is the joint density of the topology and branch lengths of the gene tree $(G, \mathbf{t})$, conditional on the species tree $(S, \boldsymbol{\tau})$. 

\subsection{The Mutation Process}\label{mprocess}

The process of evolutionary change along a phylogeny is commonly modeled by a continuous-time Markov process.  In this section, we describe the general case of a Markov mutation process on $\kappa$ states, and then consider the commonly used models for DNA sequence data for which $\kappa=4$.  We begin by specifying a general time-reversible $\kappa \times \kappa$ instantaneous rate matrix $\mathbf{Q}$ such that entry $q_{ij}, i \neq j$, gives the instantaneous rate of change from state $i$ to state $j$, 
\begin{center}
{\small 
\begin{gather}
\mathbf{Q} =
\begin{pmatrix}\label{qmatrix}
  - & \pi_{2}\mu_1 & \pi_3 \mu_2 & \cdots & \pi_\kappa \mu_{\kappa-1}  \\
  \pi_1 \mu_1  & - & \pi_3 \mu_{\kappa} & \cdots & \pi_\kappa \mu_{2\kappa-3} \\
  \pi_1 \mu_2   & \pi_2 \mu_{\kappa}  & - & \cdots & \pi_\kappa \mu_{3\kappa-6} \\
  \vdots  & \vdots & \vdots  & \ddots & \vdots  \\
  \pi_1 \mu_{\kappa-1}  & \pi_2 \mu_{2\kappa-3} &\pi_3 \mu_{3\kappa-6}  & \cdots & -
 \end{pmatrix}.
\end{gather}
}
\end{center}
In the matrix above, each $\pi_j$ term represents the frequency of state $j$ at equilibrium, with the constraints that $\pi_j > 0$ for $j \in [\kappa]:= \{1,2, \dots , \kappa \}$, and $\sum_{j=1}^{\kappa} \pi_j = 1$.  The model contains $\binom{\kappa}{2}$ additional parameters $\mu_1, \ldots , \mu_{\frac{1}{2}\kappa(\kappa-1)}$ that specify the rates of mutation between states, with the assumption that $\mu_j >0$ for $j=1, 2, \ldots , \kappa(\kappa-1)/2$. The diagonal entries of $\mathbf{Q}$ are set so that the rows sum to zero.  Note that the $\mu_l$ term in the $(i,j)^{th}$ entry is the same as in the $(j,i)^{th}$ entry, so that the model satisfies the condition of time-reversibility, i.e., $\pi_i \mathbf{Q}_{ij} = \pi_j \mathbf{Q}_{ji}$. The intuition of the model is that a pair of states $i$ and $j$ have the same basic rate of mutating from one to the other, represented by $\mu_l$, but the rate of moving to state $j$ depends also on the frequency of state $j$, given by $\pi_j$.

The instantaneous rate matrix is used to compute the transition probability matrix $\mathbf{P}(t)$ such that the $(i,j)^{th}$ entry of $\mathbf{P}(t)$ gives the probability that state $i$ mutates to state $j$ in an interval of time of length $t$.  This is done by solving the matrix differential equation $\mathbf{P}^{'}(t) = \mathbf{Q}\mathbf{P}(t)$ with initial condition $\mathbf{P}(0) = \mathbf{I}$, to give $\mathbf{P}(t) = e^{\mathbf{Q}t}$.    In phylogenetics, it is common to refer to the matrix $\mathbf{P}(t)$ as the {\itshape substitution matrix} rather than the transition probability matrix, because the word ``transition'' has a particular meaning in the process of DNA sequence mutation.

The matrix $\mathbf{Q}$ is the generalization of the 4-state GTR model for DNA sequence data to $\kappa$ states, in the sense that it allows a separate parameter for mutations between each pair of states.  All of the models we consider here can then be thought of as special cases of  $\mathbf{Q}$.  For example, if we specify $\mu_1 = \mu_2 =  \ldots =  \mu_{\frac{1}{2}\kappa(\kappa-1)}$ and $\pi_j = \frac{1}{\kappa}$ for $j \in  [\kappa]$, the resulting model is the $\kappa$-state analog of the Jukes-Cantor model \cite{jukescantor1969}, which has been called the Mk model by Lewis (2001) \cite{lewis2001}. 

When DNA sequence data are available at the tips of the tree, $\kappa=4$ and we let $j =1, 2, 3, 4$ correspond to the four nucleotides $A$, $C$, $G$, and $T$, respectively.  Below we list the restrictions on the parameters in $\mathbf{Q}$ that lead to many of the commonly-used substitution models in empirical phylogenetics.\\

{\small
\noindent {\bf [JC69] Jukes-Cantor model \cite{jukescantor1969}}: $\pi_i = \frac{1}{4}$ for $i=1, 2, 3, 4$; $\mu_j= \mu$ for all $j =1, 2, \ldots 6$.\\

\noindent {\bf [K2P] Kimura's 2-parameter model \cite{kimura1980}}: $\pi_i = \frac{1}{4}$ for $i=1, 2, 3, 4$; $\mu_1=\mu_6>0$; $\mu_2=\mu_3=\mu_4=\mu_5>0$.\\

\noindent {\bf [F81] Felsenstein's 1981 model \cite{felsenstein1981}}: $\mu_j = \mu$ for $j=1, 2, \ldots 6$; $\pi_i \in (0,1)$ for $i=1, 2, 3, 4$ with $\sum_{i=1}^{4} \pi_i = 1$.\\

\noindent {\bf [HKY85] Hasegawa, Kishino, and Yano's 1985 model \cite{hasegawaetal1985}}: $\mu_1=\mu_6>0$; $\mu_2=\mu_3=\mu_4=\mu_5>0$; $\pi_i \in (0,1)$ for $i=1, 2, 3, 4$ with $\sum_{i=1}^{4} \pi_i = 1$.\\

\noindent {\bf [TN93] Tamura and Nei's 1993 model \cite{tamuranei1993}}: $\mu_1 >0$; $\mu_6 > 0$; $\mu_2=\mu_3=\mu_4=\mu_5>0$; $\pi_i \in (0,1)$ for $i=1, 2, 3, 4$ with $\sum_{i=1}^{4} \pi_i = 1$. \\

\noindent {\bf [GTR] General time-reversible model \cite{tavare1986}}: $\mu_j >0$ for $j=1, 2, \ldots 6$; $\pi_i \in (0,1)$ for $i=1, 2, 3, 4$ with $\sum_{i=1}^{4} \pi_i = 1$.\\
}

Two additional modifications of the basic substitution models listed above are commonly used in the analysis of DNA sequence data. Both are designed to model the fact that observed rates of evolution are known to vary across sites in the sequences, with some sites evolving more quickly than others and some sites essentially never evolving.    To capture differential rates of evolution across sites, it is common to incorporate a parameter, $\rho$, for the rate of evolution at a site. Traditionally, the gamma distribution with shape parameter $\alpha > 0$ and rate parameter $\beta > 0$ is used \cite{yang1993}, which is generally denoted as ``+ $\Gamma$''. The density function for the gamma distribution is
\begin{equation}\label{gamma}
g(\rho | \alpha, \beta) = \begin{cases}
   \frac{\beta^\alpha \rho^{\alpha-1} e^{-\rho  \beta}}{ \Gamma(\alpha)}, & \text{if } \rho \geq 0 \\
   0  ,     & \text{if } \rho < 0,
     \end{cases}
  \end{equation}
where $\Gamma(\alpha)$ is the gamma function. In applications, it is common for computational purposes to use a discrete gamma model to approximate the continuous gamma distribution, which involves specifying a finite number of rate classes \cite{yang1994}.  In particular, the range of $\rho$ is cut into $m$ categories such that each category has equal probability $\frac{1}{m}$. For each category $i \in [m]$ a single rate $\rho_i$ is used to assign the rate for that class.  The rate $\rho_i$ is generally chosen to be the mean of the corresponding continuous gamma distribution for that class, computed as follows,
\begin{equation}\label{rho}
\rho_i = \frac{\int_a^b \rho \cdot g(\rho | \alpha, \beta) d \rho}{\int_a^b g(\rho | \alpha, \beta) d \rho}
\end{equation}
(see \cite{yang1994}).
Here, $a$ and $b$ represent the endpoints of the interval for category $i$. If a site has rate parameter $\rho_i$, then its instantaneous rate matrix becomes $\rho_i \mathbf{Q}$, rather than simply $\mathbf{Q}$, and the corresponding change to the substitution matrix $\mathbf{P}(t)$ is made. We denote the substitution matrix that depends on $\rho_i$ by $\mathbf{P}^{\rho_i}(t)$, with $(i,j)^{th}$-element $\mathbf{P}_{ij}^{\rho_i}(t)$.

The second modification models the fact that some sites in a DNA sequence are known not to evolve, and are thus called {\itshape invariable}.  To capture this, a parameter $\delta$ is incorporated that gives the probability that a site is invariable.  If a site is invariable, none of the states observed at the leaves differs from the state at the root. With probability $1-\delta$, the mutation probabilities at the site follow one of the Markov models described above.   This possibility is typically denoted ``+I'' in specifying the overall substitution model. Putting this all together, we represent this model by ``GTR+I+$\Gamma$''. These models have been considered in work establishing identifiability in the gene tree case \cite{allmanrhodes2006,allmanetal2008,allmanrhodes2008b}. Below we give results for ``GTR+I+$\Gamma$'' models for which the gamma-distributed rates are modeled using the discrete gamma approximation described above.

Thus far, we have described the evolutionary model for mutations between states along specific branches of a gene tree.  We now describe how these models are used to compute the probability distribution of the data at the leaves of a phylogenetic gene tree.  The key idea in calculating site pattern probabilities is that the states are not observed at the internal nodes of the tree, and so the probability must be computed by summing over all possible states for these nodes.  Additionally, probabilities along each branch are multiplied together because we assume that the mutation process proceeds independently along each branch.  Finally, we point out that we are utilizing what have been called {\itshape rate matrix models} \cite{allmanrhodes2006} in phylogenetics, since we assume that the Markov mutation process is homogeneous across branches of the phylogeny.

\begin{figure}[h]
\centering
\includegraphics[scale=0.3]{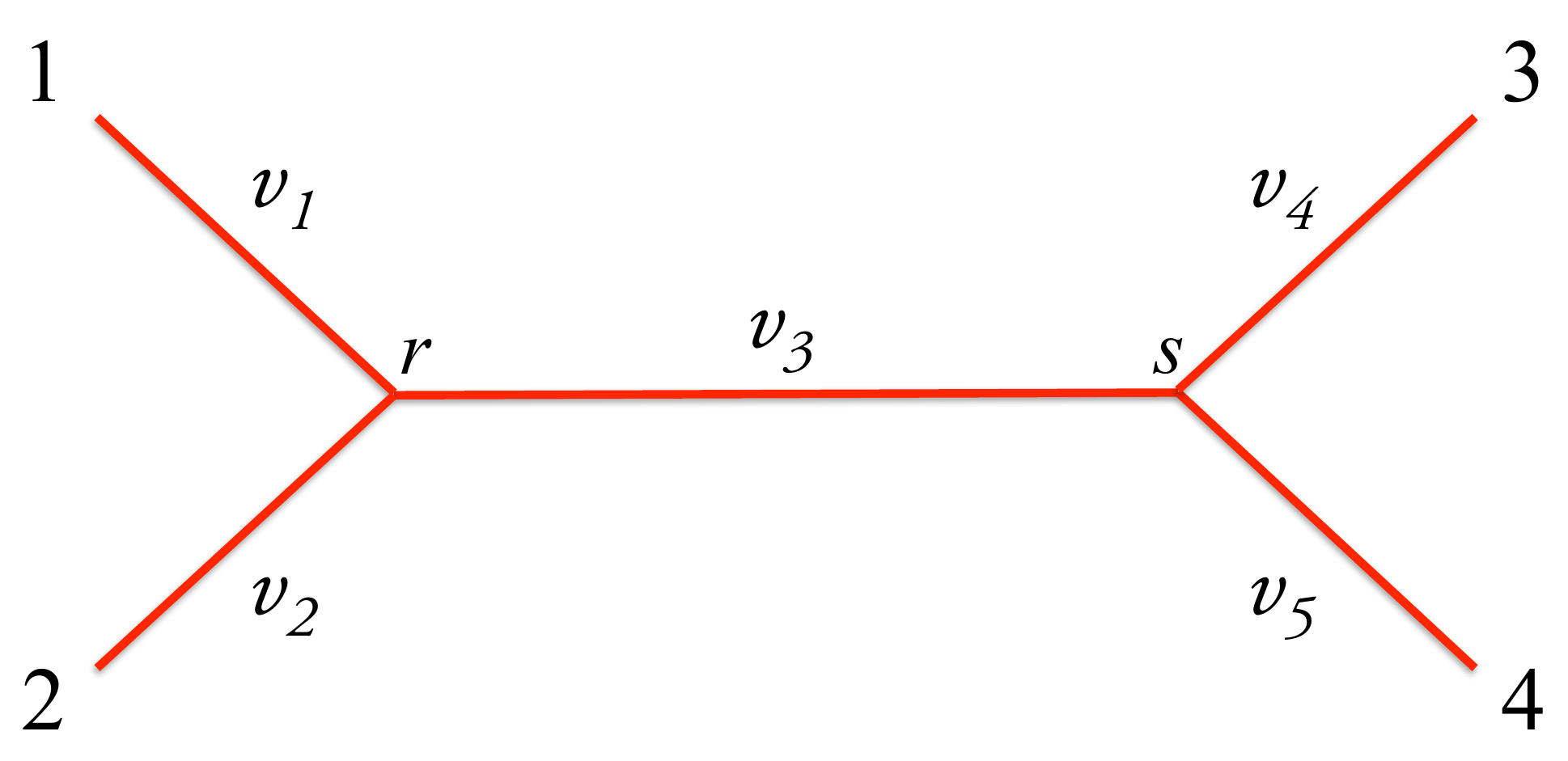}
\caption{Four-leaf unrooted gene tree.}
\label{4leafgenetree}
\end{figure}

Consider a gene tree as labeled in Figure \ref{4leafgenetree} with branch lengths $v_i$. Let $X_y$ be the state observed for lineage $y$, $y = 1, 2, 3, 4$. With $\mathbf{P}^{\rho_i} (v_i)$ denoting the substitution matrix for edge $v_i$ and considering the tree to be rooted at $r$, the {\itshape site pattern probability} on the gene tree for a particular observation $i_1 i_2 i_ 3 i_4$, $i_j \in [\kappa]$, at the tips of the tree with associated site-specific rate $\rho_i$ is given by
\begin{eqnarray*}
\label{eq:site.pattern.prob.rates}
P(X_1 = i_1, X_2 = i_2, X_3 = i_3, X_4 = i_4) 
 =  \sum_{r=1}^{\kappa} \sum_{s=1}^{\kappa} \pi_r \mathbf{P}^{\rho_i}_{r i_1}(v_1) \mathbf{P}^{\rho_i}_{r i_2}(v_2) \mathbf{P}^{\rho_i}_{rs}(v_3) \mathbf{P}^{\rho_i}_{s i_3} (v_4)\mathbf{P}^{\rho_i}_{s i_4}(v_5).
 \end{eqnarray*}
For the gene tree in Figure \ref{histA}, we define lineage 1 to be $A$, 2 to be $B$, 3 to be $C$, and 4 to be $D$, and we then have $v_1= \tau_2 + t_1, v_2 = \tau_2 + t_1 , v_3 = 2t_3-t_2+(\tau_3-t_1-\tau_2) , v_4=\tau_3+t_2$, and  $v_5 =\tau_3+t_2 $.  For the gene tree in Figure \ref{histB}, we define lineage 1 to be $A$, 2 to be $C$, 3 to be $B$, and 4 to be $D$, and we have $v_1 = v_2 = \tau_3+t_1$, $v_3 = t_2-t_1$, $v_4=\tau_3+t_2$, and $v_5 = \tau_3+2t_3-t_2$.  In general, lineages 1 and 2 are defined to be those on either side of the root, lineage 3 is the next most closely related to these in the gene tree embedded within the species tree, and the remaining lineage is assigned label 4. 

For each category $i \in [m]$ we will denote by $p^{\rho_i}_{i_1 i_2 i_3 i_4 |(G,\mathbf{t})}$ the site pattern probability on the gene tree  $(G,\mathbf{t})$. Since the rates $\rho$ are unobserved in practice, we sum over them with respect to discrete gamma distribution to obtain,
\begin{equation}
\label{eq:site.pattern.prob.rates2}
p^{\Gamma}_{i_1 i_2 i_3 i_4 |(G,\mathbf{t})} = \frac{1}{m}\sum_{i=1}^{m} p^{\rho_i}_{i_1 i_2 i_3 i_4 |(G,\mathbf{t})}
\end{equation}
(see \cite{yang1994}).
Define the 4-dimensional $\kappa \times \kappa \times \kappa \times \kappa$ array  $P^{\Gamma}_{(G,\mathbf{t})}$ as the probability distribution on all possible site patterns for the gene tree $(G,\mathbf{t})$ under a model that includes rate variation across sites. The dimensions of $P^{\Gamma}_{(G,\mathbf{t})}$  correspond to the ordered taxa of the gene tree and the entries are indexed by the states at the leaves.

 To incorporate the possibility of invariable sites, define $P^{I} = diag(\boldsymbol{\pi})$ to be a 4-dimensional array whose off-diagonal entries are 0 and whose diagonal entries are the elements of the vector $\mathbf{\pi} = (\pi_1, \cdots, \pi_{\kappa})$.  Then, the probability distribution $P_{(G,\mathbf{t})}$ in all sites patterns is given by 
 \begin{equation}
 \label{eq:site.pattern.prob}
 P_{(G,\mathbf{t})} = (1-\delta) P^{\Gamma}_{(G,\mathbf{t})} + \delta P^{I}.
 \end{equation}
The entries of $P_{(G,\mathbf{t})}$ will be denoted by $p_{i_1 i_2 i_3 i_4 |(G,\mathbf{t})}$ and we emphasize that each entry is an analytic function.

\section{The Site Pattern Probability Distribution Under the Coalescent Model}\label{model} %
Our goal is to establish identifiability of the $n$-leaf phylogenetic species tree given data at the leaves of the tree. To achieve this, we first prove identifiability of the 4-taxon species tree, and thus dedicate this section to the description of the coalescent model for a fixed species tree $(S,\boldsymbol{\tau})$ on four leaves with a set of taxa labeled by $a,b,c$ and $d$. We point out that everything in this section is easily extendable to the $n$-taxon case. 

To compute the site pattern probabilities on the species tree, we will use gene tree site pattern probabilities along with the gene tree density given in Equation \ref{gtd}.
To distinguish site pattern probabilities arising on the species tree under the coalescent model from those arising on a gene tree, we will use the notation 
$p^*_{i_1 i_2 i_3 i_4 | ((G,\mathbf{t}), (S, \boldsymbol{\tau}))}, i_j \in [\kappa] $ to denote site pattern probabilities that come from the species phylogeny $(S, \boldsymbol{\tau})$ with embedded gene tree $(G,\mathbf{t})$.

Consider first the case in which $G=S$, i.e., the gene tree and species tree have the same topology (see, for example, Figure \ref{histA}). In that case, for a fixed observation $i_1, i_2, i_3,$ and $i_4$ at the leaves of a species tree, we find
\begin{eqnarray*}
\label{eq:site.pattern.prob.gene.coal}
p^*_{i_1 i_2 i_3 i_4 | ((G,\mathbf{t}), (S, \boldsymbol{\tau}))} & = & P(X_a = i_1, X_b = i_2, X_c = i_3, X_d = i_4) \\
& = & \int_{\mathbf{t}} p_{i_1 i_2 i_3 i_4 |(G,\mathbf{t})} f((G,\mathbf{t})|(S,\boldsymbol{\tau})) d\mathbf{t}.
\end{eqnarray*}

When $G \neq S$, the labels at the leaves of the species tree may correspond to a different ordering of the labels of the lineages in the gene tree.  For example, in Figure \ref{histB}, species $a$ and $b$ are sister leaves in the species tree, but lineages $A$ and $C$ are sisters in the gene tree. Hence, for a gene tree $(G,\mathbf{t})$ and species tree $(S, \boldsymbol{\tau})$ in Figure \ref{histB}, we compute
\begin{eqnarray*}
p^*_{i_1 i_2 i_3 i_4 | ((G,\mathbf{t}), (S, \boldsymbol{\tau}))} & = & P(X_a = i_1, X_b = i_2, X_c = i_3, X_d = i_4) \\
& = & \int_{\mathbf{t}} p_{i_1 i_3 i_2 i_4 |(G,\mathbf{t})} f((G,\mathbf{t})|(S,\boldsymbol{\tau}))d\mathbf{t}.
\end{eqnarray*}

It is clear that many observations at the leaves of the species tree will not always result in the same observations at the tips of the embedded gene tree. We will use $\sigma(i_1, i_2, i_3, i_4)$ to represent the observation at the leaves of the gene tree, where $\sigma$ is a permutation of $i_1, i_2, i_3,$ and $i_4$ for a fixed gene tree topology nested within a species tree. In general we write 
\begin{equation}
\label{eq:site.pattern.prob.gene.coal2}
p^*_{i_1 i_2 i_3 i_4 | ((G,\mathbf{t}), (S, \boldsymbol{\tau}))} 
 =  \int_{\mathbf{t}} p_{\sigma(i_1, i_2, i_3, i_4) |(G,\mathbf{t})} f((G,\mathbf{t})|(S,\boldsymbol{\tau})) d\mathbf{t}.
\end{equation}

Equation \ref{eq:site.pattern.prob.gene.coal2} gives the probability of the event $\{ X_a = i_1, X_b = i_2, X_c = i_3, X_d = i_4 \}$ for a fixed gene tree $(G,\mathbf{t})$  and species tree $(S, \boldsymbol{\tau})$. Our goal is to obtain these probabilities conditioning only on the species tree $(S,\boldsymbol{\tau})$. Since the true gene tree is unobserved, we must consider all possible gene trees that are consistent with the given species tree, and weight the probability of the site pattern of interest appropriately by the probability of each gene tree under the coalescent model. We note that the limits of integration will depend on where the coalescent events happen. In addition, recall that $\mathcal{G}_S$ denotes the set of all gene trees $(G,\mathbf{t})$ conditional on species tree $(S,\boldsymbol{\tau})$.
This leads to the following expression for the probability that $\{ X_a = i_1, X_b = i_2, X_c = i_3, X_d = i_4 \}$ for species tree $(S, \boldsymbol{\tau})$,
\begin{align}
\label{eq:site.pattern.prob.coal}
\nonumber p^*_{i_1 i_2 i_3 i_4 | (S, \boldsymbol{\tau})} &= \sum_{(G,\mathbf{t}) \in \mathcal{G}_S} p^*_{i_1 i_2 i_3 i_4 | ((G,\mathbf{t}), (S, \boldsymbol{\tau}))} \\ 
&=\sum_{(G,\mathbf{t}) \in \mathcal{G}_S} \int_{\mathbf{t}} p_{\sigma(i_1 i_2 i_3 i_4 ) |(G,\mathbf{t})} f((G,\mathbf{t})|(S,\boldsymbol{\tau})) d\mathbf{t}.
\end{align}

We denote the probability distribution on all possible site patterns given species tree $(S,\boldsymbol{\tau})$ by the 4-dimensional $\kappa \times \kappa \times \kappa \times \kappa$ array $P^*_{(S,\boldsymbol{\tau})}$ and let  $p^*_{i_1 i_2 i_3 i_4 |(S,\boldsymbol{\tau})}, i_j \in [\kappa] $ denote the site pattern probabilities as given in Equation \ref{eq:site.pattern.prob.coal}. We stress again that in general each $ p^*_{i_1 i_2 i_3 i_4 |(S,\boldsymbol{\tau})} $ is not a polynomial but an {\em analytic function}. 

Let $\mathcal{C}_{GTR(\kappa)}$ be the $\kappa$-state analytic GTR+I+$\Gamma$ model 
under the coalescent for a 4-leaf species tree with the discrete gamma distribution used to model rate variation across sites. The model parameters are the topology of the species tree $S$, the matrix $\mathbf{Q}$ as described by (\ref{qmatrix}), the vector of speciation times $\boldsymbol{\tau}=(\tau_1, \tau_2, \tau_3)$ for $S$, the effective population size $\theta$, the proportion of invariable sites $\delta$, and the parameters of the discrete gamma distribution $\alpha$ and $\beta$.
For a fixed species tree $S$ we will denote the continuous parameter space of $\mathcal{C}_{GTR(\kappa)}$  by an open set $U_{S} \subseteq \mathbb{R}^M$. Then the parameterization map $\psi_{S}$, for the analytic model, giving the probability distribution of the variables at the leaves of the 4-taxon species tree $S$, is 
\begin{align}\label{map}
\psi_{S}: U_{S}  &\to \Delta^{\kappa^4-1} \\
u &\mapsto P^*_{(S,\boldsymbol{\tau})},\nonumber
\end{align}
where $\Delta^{\kappa^4-1} \subseteq [0,1]^{\kappa^4}$ is the probability simplex $$\Delta^{\kappa^4-1} :=\{(p_1,p_2,\dots,p_{\kappa^4}) \in \mathbb{R}^{\kappa^4} | \sum_{i=1}^{\kappa^4} p_i=1 \ \text{and} \  p_i \geq 0 \ \text{for all} \  i\}.$$
The image of the map, $\mathcal{C}_S := \psi_S(U)\subseteq \Delta^{\kappa^4-1}$, is the  {\em coalescent phylogenetic model}. One can easily see that the model can be extended to any $n$-taxon species tree. 
 
 \subsection{A few remarks about embedded gene trees within a species tree}\label{pairing}
Let $S$ be a four-taxon symmetric $((a,b),(c,d))$ or asymmetric $(a,(b,(c,d)))$ species tree with a cherry $(c,d)$ as in Figure \ref{hist}. Recalling that the gene trees arising under the coalescent model will all satisfy the molecular clock (i.e., the distance from each tip to the root is identical), it might be obvious to some readers that for any observation $i_1 i_2 i_3 i_4$ at the leaves of $(S, \boldsymbol{\tau})$, $i_1, i_2, i_3, i_4 \in [\kappa]$ 
\begin{equation}\label{eq:cherry}
p^*_{i_1 i_2 i_3 i_4 | (S,\boldsymbol{\tau})} = p^*_{i_1 i_2 i_4 i_3 |(S,\boldsymbol{\tau})}.
\end{equation}
However, if the above relationship is conditioned on a particular gene tree for which $(C,D)$ is not a cherry, then the equation will be false, as the example below shows.

\begin{exmp}
Let the gene tree $G$ arising from the species tree $(a,(b,(c,d)))$ be $(((A,C),B),D)$, as in Figure \ref{histB}. Then according to Equation \ref{eq:site.pattern.prob.gene.coal2} for any observation $i_1 i_2 i_3 i_4$ at the leaves of $(S, \boldsymbol{\tau})$ we get
\begin{equation*}
p^*_{i_1 i_2 i_3 i_4 | ((G, \mathbf{t}), (S, \boldsymbol{\tau}))} = \int_{\mathbf{t}} p_{i_1 i_3 i_2 i_4 |(G,\mathbf{t})} f((G,\mathbf{t})|(S,\boldsymbol{\tau})) d\mathbf{t},									
\end{equation*}
\begin{equation*}
p^*_{i_1 i_2 i_4 i_3 | ((G, \mathbf{t}), (S, \boldsymbol{\tau}))} 	= \int_{\mathbf{t}} p_{i_1 i_4 i_2 i_3 |(G,\mathbf{t})} f((G,\mathbf{t})|(S,\boldsymbol{\tau})) d\mathbf{t}.								
\end{equation*}
Since $p_{i_1 i_3 i_2 i_4 |(G,\mathbf{t})} \neq p_{i_1 i_4 i_2 i_3 |(G,\mathbf{t})}$ (see Equations \ref{eq:site.pattern.prob.rates}-\ref{eq:site.pattern.prob} and Figure \ref{hist}), we see that $p^*_{i_1 i_2 i_3 i_4 | ((G, \mathbf{t}), (S, \boldsymbol{\tau}))} 	\neq p^*_{i_1 i_2 i_4 i_3 | ((G, \mathbf{t}), (S, \boldsymbol{\tau}))} $.
\end{exmp}

This example demonstrates that in general for many individual gene trees an equation analogous to Equation  \ref{eq:cherry} does not hold, yet when the sum is taken over all possible gene trees $(G,\mathbf{t})$ conditional on species tree $(S,\boldsymbol{\tau})$ the result does hold. In this section we would like to clarify why and how the above Equation \ref{eq:cherry} is true under the coalescent model. 
\vspace{0.1in}

\noindent{\bfseries Case 1:}  Consider a collection of gene trees with $(C,D)$ as a cherry, denoted by $\mathcal{G}_{CD}$. Then for each $(G, \mathbf{t}) \in \mathcal{G}_{CD}$ it is true that
\begin{equation*}
 p^*_{i_1 i_2 i_3 i_4 | ((G, \mathbf{t}),(S,\boldsymbol{\tau}))} = p^*_{i_1 i_2 i_4 i_3 |((G, \mathbf{t}),(S,\boldsymbol{\tau}))},
 \end{equation*}
for all $i_1, i_2, i_3, i_4 \in [\kappa]$.
\vspace{0.1in}

\noindent{\bfseries Case 2:} Now consider the set of gene trees for which $(A,B)$ is a cherry and $(C,D)$ is not a cherry. There are only two topological gene trees of this form:  $G_1 = (((A,B),C),D)$ and $G_2 = (((A,B),D),C)$.  Since $(A,B)$ is a cherry, the coalescent event joining $A$ and $B$ can happen in population $P_2$ or $P_3$ for the symmetric tree in Figure \ref{histA}, while for the asymmetric species tree in Figure \ref{histB} $A$ and $B$ can only coalesce  in population $P_3$. This is true for both gene trees, $G_1$ and $G_2$.  
Let $\mathbf{t}$ represent a vector of coalescent times for which $A$ and $B$ coalesce in the same population for both gene trees, whereas the other two coalescent events necessarily happen in population $P_3$.
Now the probability densities for these two gene trees, $(G_1,\mathbf{t})$ and $(G_2,\mathbf{t})$, are identical under the coalescent model, i.e. $f((G_1,\mathbf{t})|(S,\boldsymbol{\tau})) = f((G_2,\mathbf{t})|(S,\boldsymbol{\tau}))$.
Then, for all $i_1, i_2, i_3, i_4 \in [\kappa]$ 
\begin{eqnarray}
\label{eq:case21} p^*_{i_1 i_2 i_3 i_4 |((G_1,\mathbf{t}),(S,\boldsymbol{\tau}))} & =  \int_{\mathbf{t}} p_{i_1 i_2 i_3 i_4|(G_1,\mathbf{t})} f((G_1,\mathbf{t})|(S,\boldsymbol{\tau})) d\mathbf{t},\\
\label{eq:case22} p^*_{i_1 i_2 i_4 i_3 |((G_{2}, \mathbf{t}),(S,\boldsymbol{\tau}))} & =  \int_{\mathbf{t}} p_{i_1 i_2 i_4 i_3|(G_2,\mathbf{t})} f((G_{2},\mathbf{t})|(S,\boldsymbol{\tau})) d\mathbf{t}.
\end{eqnarray}

\noindent Now, note that 
\begin{equation}\label{eq:case23}
p_{i_1 i_2 i_3 i_4|(G_{u},\mathbf{t})} = p_{i_1 i_2 i_4 i_3|(G_{v},\mathbf{t})}, 
\end{equation}
for $u, v \in \{1, 2\}, u \neq v$, which gives
\begin{equation}\label{eq.matchedhistories}
p^*_{i_1 i_2 i_3 i_4 |((G_1,\mathbf{t}),(S,\boldsymbol{\tau}))} + p^*_{ i_1 i_2 i_3 i_4 |((G_{2}, \mathbf{t}),(S,\boldsymbol{\tau}))} = p^*_{i_1 i_2 i_4 i_3 |((G_1,\mathbf{t}),(S,\boldsymbol{\tau}))} + p^*_{i_1 i_2 i_4 i_3 |((G_{2}, \mathbf{t}),(S,\boldsymbol{\tau}))}. 
\end{equation}

\noindent{\bfseries Case 3:} Consider the set of gene trees for which neither $(A,B)$ nor $(C,D)$ are cherries. Then at least one of the following pairs are cherries: $$\{ (A,C), (A,D), (B,C), (B,D) \}.$$ First, suppose that $(A,C)$ is a cherry (the case that $(A,D)$ is a cherry is analogous).
\vspace{0.07in}

\noindent {(i).} First, suppose that $(B,D)$ is also a cherry, and denote this gene tree by $G_1 = ((A,C),(B,D))$. For the symmetric species tree all coalescent events happen in population $P_3$ for this gene tree (Figure \ref{histA}). Let $t_1$ be the coalescent time for lineages $A$ and $C$, and $t_2$ be the coalescent time for lineages $B$ and $D$. One can see that there are two metric gene trees, $(G_1, \mathbf{t})_{t_1 > t_2}$ and $(G_1, \mathbf{t})_{t_1 < t_2}$, where $\mathbf{t}$ is the vector of coalescent times. For the asymmetric species tree, (Figure \ref{histB}) there are three metric gene trees: one in which lineages $B$ and $D$ coalesce in population $P_2$, and the other two (analogous to the symmetric case) in which all coalescent events happen in $P_3$.

Now consider a second gene tree, $G_2$, in which $(A,D)$ and $(B,C)$ are cherries. Similar to our discussion above, one can show that if the species tree is symmetric there are two metric gene trees, while for the asymmetric species tree there are three such pairs. Let $\mathbf{t}$ be a vector of coalescent times consistent with both trees, $G_1$ and $G_2$. For example, if $(S,\boldsymbol{\tau})$ is the asymmetric tree and the lineages $B$ and $D$ coalesce in $P_2$ for $G_1$ then lineages $B$ and $C$ also coalesce in $P_2$ for $G_2$, and the other two events occur in population $P_3$. As in Case 2, the probability densities of $(G_1, \mathbf{t})$ and $(G_2, \mathbf{t})$ are identical.  Then the relationship in Equation \ref{eq.matchedhistories} holds.

\vspace{0.07in}

\noindent {(ii).} Now suppose that  $(B,D)$ is not a cherry. Then two gene trees are possible: $((A,C),D),B)$ and $((A,C),B),D)$. Let $G_1$ be the tree $((A,C),D),B)$ (the other case is analogous), and consider the tree $G_2 = ((A,D),C),B)$. For both the symmetric and the asymmetric species trees, $G_1$ and $G_2$ have all coalescent events happening in $P_3$ and thus result only in one gene tree pair  $(G_i, \mathbf{t})$ for each $i \in \{1,2\}$. The same arguments as in the previous cases can be used to show that Equation \ref{eq.matchedhistories} holds in this case as well. Finally, we note that the cases for which $(B,C)$, and analogously $(B,D)$, are cherries can be handled in exactly the same manner as above. This concludes Case 3.

\vspace{0.1in}

It is then straightforward to check that all of the possible histories for both the symmetric and asymmetric species trees are included in exactly one of the cases described above. Now, recall that the probability of any observation $i_1 i_2 i_3 i_4$ at the leaves of $S$ is the sum over all possible gene trees. Thus, Equation \ref{eq:cherry} is true under the coalescent model. If the species tree $(S,\boldsymbol{\tau})$ is symmetric then in addition we also get 
$$p^*_{i_1 i_2 i_3 i_4 | (S,\boldsymbol{\tau})} = p^*_{i_2 i_1 i_3 i_4 |(S,\boldsymbol{\tau})},$$ for any observation $i_1 i_2 i_3 i_4$ at the leaves of $S$, $i_1, i_2, i_3, i_4 \in [\kappa]$.

\section{Splits, Flattenings, Invariants and Identifiability Background}\label{background}
In this section we provide definitions and results that we will use in subsequent sections. Let $\mathcal{L}$ denote a set of taxa for a tree (species or gene), e.g. if $S$ is a species tree in Figure \ref{hist}, then $\mathcal{L} = \{a,b,c,d\}$.
\begin{defn}
A {\bf split} of a set of taxa $\mathcal{L}$ is a bipartition of $\mathcal{L}$ into two non-empty subsets $L_1$ and $L_2$, denoted $L_1|L_2$. A split $L_1|L_2$ is {\bf valid} for tree $T$ if the subtrees containing the taxa in $L_1$ and in $L_2$  do not intersect. 
\end{defn}

In general, the definition of a split is used for unrooted trees. To avoid any ambiguity, by a split of a rooted 4-taxon species tree (symmetric or asymmetric) we will always mean a bipartition of $\mathcal{L}$ into two equal subsets, i.e. $|L_1|=|L_2|=2$. 

In particular, for the  four-leaf species trees $((a,b),(c,d))$ or $(a,(b,(c,d)))$ (Figure \ref{hist}) there are three splits according to our discussion above. The split $L_1|L_2 = ab|cd$ is valid and splits $ ac|bd$ and $ ad|bc$ are not valid. For the gene tree in Figure \ref{histB} the split $AC|BD$ is valid, while $AB|CD$ and $AD|BC$ are not. This also demonstrates that a valid split for a species tree will not always result in the same valid split for an embedded gene tree.
Splits of a set of taxa provide a natural way to rearrange the $n$-dimensional $\kappa \times \cdots \times \kappa$ array $P$ as a matrix.
\begin{defn}
Let $L_1|L_2$ be any split of a set of taxa $\mathcal{L}$. A \textbf{flattening} of $P$, denoted by $\text{Flat}_{L_1|L_2}(P)$, is an $\kappa^{|L_1|} \times \kappa^{|L_2|}$ matrix,  whose rows are indexed by possible states for the leaves in $L_1$ and columns by possible states in $L_2$. The entries of $\text{Flat}_{L_1|L_2}(P)$ correspond to the probability of the site pattern specified by the row and column indices. 
\end{defn}
For example, for $\kappa=4$ and a species tree $S$ as in Figure \ref{hist}, the $16 \times 16$ flattening of $P^*_{(S,\boldsymbol{\tau})}$ for a split $L_1|L_2 = ad|bc$ is
\begin{center}
{\small 
\begin{gather*}
{\text{Flat}_{ad|bc}(P^*_{(S,\boldsymbol{\tau})})} =
\begin{pmatrix}
p^*_{AAAA}	& p^*_{AACA} 	& p^*_{AAGA} 	& p^*_{AATA} 	& p^*_{ACAA} 	& \cdots 	& p^*_{ATTA} \\
p^*_{AAAC}  	& p^*_{AACC} 	& p^*_{AAGC} 	& p^*_{AATC} 	& p^*_{ACAC} 	& \cdots 	& p^*_{ATTC} \\
p^*_{AAAG} 	& p^*_{AACG} 	& p^*_{AAGG} 	& p^*_{AATG} 	& p^*_{ACAG} 	& \cdots 	& p^*_{ATTG} \\
p^*_{AAAT} 	& p^*_{AACT} 	& p^*_{AAGT} 	& p^*_{AATT} 	& p^*_{ACAT} 	& \cdots 	& p^*_{ATTT} \\
p^*_{CAAA} 	& p^*_{CACA} 	& p^*_{CAGA} 	& p^*_{CATA} 	& p^*_{CCAA} 	& \cdots 	& p^*_{CTTA} \\
\vdots  		& \vdots 		& \vdots  		& \vdots  		& \vdots  		& \ddots 	& \vdots  \\
p^*_{TAAT} 	& p^*_{TACT} 	& p^*_{TAGT} 	& p^*_{TATT} 	& p^*_{TCAT} 	& \cdots 	& p^*_{TTTT} 
\end{pmatrix}.
\end{gather*}
}
\end{center}
\vspace{0.1in}

A very important concept that will be tied with the minors of the matrix $\text{Flat}_{L_1|L_2}(P)$ is that of {\em invariants}. Consider an asymmetric species tree $S=(a,(b,(c,d)))$, where $c$ and $d$ are sister leaves. Then, $$ p^*_{\star \star i j |(S,\boldsymbol{\tau})} -p^*_{\star \star j i |(S,\boldsymbol{\tau})} = 0,$$ for all $i,j \in [\kappa]$, where $\star$ indicates any value in $[\kappa]$ (see Section \ref{pairing}). This is an example of a {\em linear invariant}. It is termed {\em linear} since it is a linear function in the site pattern probabilities.  More precisely, an {\em invariant} is a function in the site pattern probabilities that vanishes when evaluated on any distribution arising from the model. Linear invariants for gene trees have been studied by several authors \cite{Lake1987,Cavender1989,Fu1992,Fu1995}. In general, linear invariants for a gene tree are not necessarily invariants for the same species tree. Intuitively this makes sense, since when the sum is taken over all gene trees embedded within a species tree it is quite obvious that some linear invariants will vanish on one gene tree topology but not the other.

Let $\mathcal{R}$ be a collection of analytic functions $f_1, f_2, \dots , f_r$ defined on the common domain $D \subseteq \mathbb{R}^n$. An {\em analytic variety}  $V(\mathcal{R})$ is a simultaneous zero-set of functions in $\mathcal{R}$,
\begin{equation*}
V(\mathcal{R})=\{\mathbf{a} \in D | f_i(\mathbf{a})=0, 1 \leq i \leq r\}.
\end{equation*}

Fix a coalescent phylogenetic model $\mathcal{C}_S \subseteq \Delta^{\kappa^4-1}$ as defined in Section \ref{model}. Then an analytic function $f$ is called a {\em coalescent phylogenetic invariant} if $f(\mathbf{a})=0$ for all $\mathbf{a} \in \mathcal{C}_S$.   

A very powerful and important concept about the zero-sets of analytic varieties will play an important role in our arguments about identifiability. It is well-known that if a real function $f$ is analytic on the connected open set $U \subseteq \mathbb{R}^n$ and its zero set is of positive measure, then $f \equiv 0$. Let a set $U \subseteq \mathbb{R}^n$ be of full dimension and set $W:=U \cap V(\mathcal{R})$, where $\mathcal{R}$ is a finite set of nonconstant analytic functions defined on $U$. If $W$ is a proper subvariety  of $U$ then it must be of dimension strictly less than $n$. In particular, $W$ is necessarily of Lebesgue measure zero. 

A key issue when formulating any statistical model is whether the parameters of that model are identifiable. Classically, identifiability means the following: suppose that $\mathcal{M}(\Theta) = \{P_{\theta}: \theta \in \Theta \}$ defines a family  of probability distributions with parameters $\theta$.  The model $\mathcal{M}(\Theta)$ is said to be {\em identifiable} if the mapping $\theta \mapsto P_{\theta}$ is injective. This definition is rather too strict as for many models this map will not be injective. Describing a set of parameters on which a model is non-identifiable and excluding it from a domain is not always possible, especially for identifiability of numerical parameters.  Nonetheless, identifiability for parametric models, both algebraic and analytic, can be proved by demonstrating that the set of parameters on which the model is non-identifiable is a subset of a proper subvariety of measure zero. In this case we say that model parameters are {\em generically identifiable}. 

Generic identifiability of a gene tree topology and numerical parameters for many evolutionary models is well-studied and established. A series of papers by E. Allman and J. Rhodes and collaborators \cite{allmanetal2008,allmanrhodes2008b,allmanrhodes2009,allmanetal2010b} have used an algebraic framework to establish identifiability of the unrooted gene tree and associated model parameters for substitution models as general as the General Markov model with a proportion of sites invariable (GM+I) as well as for the general time reversible model with rate variation following the gamma distribution (GTR+$\Gamma$).

For the general $\kappa$-state Markov model $\mathcal{M}$ a flattening of a tensor $P$ is used to prove generic identifiability of a gene tree topology. Briefly, let $P$ be a joint distribution arising from $\mathcal{M}_T$ on an $n$-leaf gene tree $T$ then: (i) if $L_1|L_2$ is a valid split for $T$, rank$(\text{Flat}_{L_1|L_2}(P)) \leq \kappa$, and (ii) if $L_1|L_2$ is not a valid split for $T$, {\em generically} rank$(\text{Flat}_{L_1|L_2}(P)) > \kappa$. In particular, for a valid split on $T$ the $(\kappa+1)$-minors of the matrix $\text{Flat}_{L_1|L_2}(P)$, called {\em edge invariants}, all vanish on $\mathcal{M_T}$. 

\section{Main Results}

To prove identifiability of a 4-leaf species tree from site pattern probabilities we will use precise formulas for the generalized Jukes-Cantor $k$-state model under the coalescent for symmetric and asymmetric 4-leaf species trees, which are described fully in Section \ref{JC69}. In Theorem \ref{thm:quartets}, we establish an identifiability result for unrooted 4-leaf species trees (note that the rooted symmetric and the rooted asymmetric species trees yield a single unrooted species tree topology), making a distinction between symmetric and asymmetric trees when necessary.  Additional arguments extend this result to $n$-leaf species trees, as stated in Corollary \ref{cor:bigtrees}.

In the next theorem we are going to identify the parameter space $U_{S}$ with a full dimensional subset of $\mathbb{R}^M$.
In particular, $U_S$ is an open connected subset of $\mathbb{R}^M$.  Also, recall that the parameterization map $\psi_{S}$ defined by (\ref{map}) is given by analytic functions. In Section  \ref{background} we have stated a result from the theory of analytic functions of several complex variables that will play an important role in our next argument. 
For clarity we rephrase this fact as follows: {\em if a real function $f$ is analytic on the connected open set $U \subseteq \mathbb{R}^n$ and $f$ is not identically zero then the set $Z(f) = \{x \in U | f(x) = 0\}$ has $n$-dimensional Lebesgue measure zero.} For more information about analytic functions of several complex variables and the proof of the above fact, the reader is encouraged to consult \cite{gunning2009analytic}.

We will need a few additional results to prove Theorem \ref{thm:quartets}. Let $A = (a_{ij})$ be an $n \times n$ real symmetric matrix. The well-known {\em Sylvester's criterion} states that  $A$ is positive definite if and only if all of its principal minors are positive. Positive definite matrices are invertible, the sum of positive definite matrices is positive definite, and multiplication of a positive definite matrix by a real positive number is also positive definite. In addition, recall that a matrix $A$ is {\em strictly diagonally dominant} if $|a_{ii}| > \sum_{i \neq j} |a_{ij}|$ for all $i$. A symmetric strictly diagonally dominant matrix $A$ with real non-negative diagonal entries is positive definite. 

Next, we would like to make several observations about the discrete gamma distribution that we will use in the proof of Theorem \ref{thm:quartets}. From the definition of the rates $\rho_i$ (see Equation (\ref{rho})) one can see that $0 <\rho_1 < \rho_2 < \cdots < \rho_m$, where $m$ is the number of categories. For any fixed $m$ and $\alpha =1$ we show that we can always find $\beta>0$ such that $\rho_m < n$ for some positive number $n$. With $\alpha =1$ the rate $\rho_m$ becomes
\begin{equation}\label{rho_m}
\rho_m = \frac{\int_a^\infty \rho \cdot \beta \cdot e^{-\rho  \beta} d \rho}{\int_a^\infty \beta \cdot e^{-\rho  \beta} d \rho}.
\end{equation}
Recall that each category has equal probability, hence the left-most endpoint of the last category corresponds to a cumulative probability of $(m-1)/m$. Now we compute the left-most endpoint $a$ for the last category:
\begin{align*}
\int_0^a \beta \cdot e^{-\rho  \beta} d\rho &= \frac{m-1}{m}\\
1-e^{-a  \beta}&= \frac{m-1}{m}\\
a &= \frac{\ln(m)}{\beta}.
\end{align*}
Using this endpoint and substituting into Equation (\ref{rho_m}) we find that 
\begin{equation*}
\rho_m = \frac{1+\ln(m)}{\beta}.
\end{equation*}
Therefore, for any fixed number of categories $m$ and $\alpha=1$ if $\beta > \frac{1+\ln(m)}{n}$, for some positive real number $n$, then $\rho_m <n$. In particular, all rates $\rho_i$ are bounded by $n$. Of course, in practice such a choice of parameters may not be biologically meaningful, but for our proof we only need the existence of a single point to show generic identifiability.

\begin{thm}\label{thm:quartets}
Let $S$ be a four-taxon symmetric $((a,b),(c,d))$ or asymmetric $(a,(b,(c,d)))$ species tree with a cherry $(c,d)$ and let $L_1 | L_2$ be one of the splits of taxa, $ab|cd$, $ac|bd$ or $ad|bc$.  
Consider the  $\mathcal{C}_{GTR(\kappa)}$ $\kappa$-state analytic \textup{GTR+I+$\Gamma$} model under the coalescent for a species tree $S$ with the discrete gamma distribution used to model rate variation across sites. 
Identify the parameter space $U_{S}$ with a full dimensional subset of $\mathbb{R}^M$. 
\begin{enumerate}
\item If $L_1 | L_2$ is a valid split for $S$, then for all distributions $P^*_{(S,\boldsymbol{\tau})}$ arising from the model 
$$\textup{rank}(\textup{Flat}_{L_1 | L_2}(P^*_{(S,\boldsymbol{\tau})})) \leq \binom{\kappa+1}{2}.$$
\item If $L_1 | L_2$ is not a valid split for $S$, then for generic distributions $P^*_{(S,\boldsymbol{\tau})}$ arising from the model 
$$\textup{rank}(\textup{Flat}_{L_1 | L_2}(P^*_{(S,\boldsymbol{\tau})})) > \binom{\kappa+1}{2}.$$
\end{enumerate}
\end{thm}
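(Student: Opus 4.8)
The plan is to prove the two parts separately, handling the valid-split upper bound first (the easy direction) and then the non-valid-split lower bound (the hard direction). For part (1), I would first reduce the analytic species-tree distribution to a finite sum of gene-tree distributions. Recall from Equation~\eqref{eq:site.pattern.prob.coal} that $P^*_{(S,\boldsymbol{\tau})}$ is an integral over coalescent times of gene-tree site-pattern distributions $P_{(G,\mathbf{t})}$, summed over the finitely many topological gene trees $(G,\mathbf{t}) \in \mathcal{G}_S$. Since flattening is a linear operation on the array and matrix rank is subadditive, I would write
\begin{equation*}
\textup{Flat}_{L_1|L_2}(P^*_{(S,\boldsymbol{\tau})}) = \sum_{(G,\mathbf{t})} \int_{\mathbf{t}} \textup{Flat}_{L_1|L_2}\!\left(P_{\sigma(G,\mathbf{t})}\right) f((G,\mathbf{t})\,|\,(S,\boldsymbol{\tau}))\,d\mathbf{t}.
\end{equation*}
The key observation is that each $P_{(G,\mathbf{t})}$ in Equation~\eqref{eq:site.pattern.prob} is itself of the form $(1-\delta)\,\tfrac{1}{m}\sum_i P^{\rho_i}_{(G,\mathbf{t})} + \delta\,P^I$. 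When $L_1|L_2 = ab|cd$ is the valid species split, one must check that this split, although possibly \emph{not} valid for an individual gene tree $G$, nonetheless yields flattenings whose rank is controlled. Here I would invoke the GTR+I+$\Gamma$ bound established in the Allman--Rhodes framework (cited in Section~\ref{background}): for a single gene tree under a $\kappa$-state GTR+I+$\Gamma$ model, a \emph{valid} split gives flattening rank at most $\binom{\kappa+1}{2}$, the $\Gamma$ contributing the $\binom{\kappa}{2}$ symmetric two-state combinations and $I$ a further $\kappa$, diagonalized appropriately. The subtlety is that the integral and sum over gene trees must not inflate the rank beyond $\binom{\kappa+1}{2}$; this should follow because the Case~1--Case~3 pairing analysis of Section~\ref{pairing} shows that summing over the relevant gene trees respects the column/row structure of the $ab|cd$ flattening, so the image of each summand lies in a common $\binom{\kappa+1}{2}$-dimensional space.

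For part (2), the strategy is the standard analyticity argument sketched in Section~\ref{background}. The rank of $\textup{Flat}_{L_1|L_2}(P^*_{(S,\boldsymbol{\tau})})$ exceeds $\binom{\kappa+1}{2}$ exactly when some $(\binom{\kappa+1}{2}+1)$-minor is nonzero. Each such minor is an analytic function of the model parameters $u \in U_S$, because $\psi_S$ is analytic and minors are polynomials in the entries. By the quoted fact on real-analytic functions, it therefore suffices to exhibit a \emph{single} parameter point $u^\star \in U_S$ at which the relevant minor is nonzero: if one minor is not identically zero, its vanishing locus has Lebesgue measure zero, so generically the rank is strictly greater than $\binom{\kappa+1}{2}$. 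To construct $u^\star$, I would specialize to the generalized Jukes--Cantor ($\kappa$-state Mk) model under the coalescent, whose closed-form site-pattern formulas are promised in Section~\ref{JC69}, and set $\alpha = 1$ with $\beta$ chosen large (using the computation just before the theorem showing $\rho_m = (1+\ln m)/\beta$ can be made small, so all rates are bounded). At such a point the flattening for a non-valid split should be computable explicitly.

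The main obstacle is precisely this explicit rank computation at $u^\star$ for the non-valid split. I would aim to show that the flattening matrix, after the specialization, is (or contains as a principal submatrix) a symmetric strictly diagonally dominant matrix with nonnegative diagonal, hence positive definite by Sylvester's criterion and the diagonal-dominance fact recorded before the theorem. Positive definiteness of a submatrix of size $\binom{\kappa+1}{2}+1$ forces the rank to exceed $\binom{\kappa+1}{2}$. The delicate points are: (a) verifying that the discrete-gamma averaging and the invariable-sites term combine to keep the diagonal dominant rather than washing out the off-diagonal structure, which is why the explicit bound $\rho_m < n$ on the rates is needed; and (b) confirming that for the \emph{non-valid} species split the coalescent sum over gene trees does \emph{not} collapse the extra rank, i.e.\ that the pairing symmetries of Section~\ref{pairing} (which forced equalities for the valid split) genuinely fail here, leaving a nonzero large minor. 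Establishing (b) rigorously, presumably via the generalized Jukes--Cantor formulas, is where the real work lies, and it is here that the distinction between the symmetric and asymmetric species trees will have to be handled carefully.
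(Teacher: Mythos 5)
Your part (2) is essentially the paper's own argument: the $\left(\binom{\kappa+1}{2}+1\right)$-minors composed with $\psi_S$ are analytic on the connected open set $U_S$, so it suffices to exhibit a single parameter point where one such minor is nonzero, and the paper constructs it exactly as you propose --- generalized Jukes--Cantor specialization, $\delta=0$, $\alpha=1$, $\beta>\frac{1+\ln(m)}{2}$ so that all $\rho_i\in(0,2)$, and then strict diagonal dominance plus Sylvester's criterion to get a positive definite submatrix of each $F_{\rho_i}$, positive definiteness surviving the average $\frac{1}{m}\sum_i F_{\rho_i}$ (the paper uses submatrices of size $\kappa(\kappa-1)$ for $\kappa\geq 4$, size $\kappa(\kappa-1)+1$ for $\kappa=3$, and the full matrix for $\kappa=2$). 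Two details resolve your worry (b): the paper chooses $\tau_2=\tau_3$, so the symmetric and asymmetric species trees give the same distribution, and it observes that $\text{Flat}_{ac|bd}(P^*_{(S,\boldsymbol{\tau})})=\text{Flat}_{ad|bc}(P^*_{(S,\boldsymbol{\tau})})$ identically by the cherry symmetry, so one computation covers every case.

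The genuine gap is in part (1). The bound you attribute to the Allman--Rhodes framework --- that a single gene tree under GTR+I+$\Gamma$ has valid-split flattening rank at most $\binom{\kappa+1}{2}$, with ``$\Gamma$ contributing $\binom{\kappa}{2}$ and I a further $\kappa$'' --- is not what is cited in Section~\ref{background} (that result is rank $\leq\kappa$ for a valid split under the general Markov model) and is not available in the form you need: an $m$-category rate mixture plus invariable sites only gives rank at most $(m+1)\kappa$ by subadditivity, which exceeds $\binom{\kappa+1}{2}$ for moderate $m$. Worse, the gene trees that cause the difficulty are the discordant ones, e.g.\ $(((A,C),B),D)$, for which $AB|CD$ is \emph{not} a valid split, so no valid-split bound applies to them at all --- you note this yourself and then invoke the bound anyway. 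Finally, rank subadditivity runs the wrong way: summing (and integrating over $\mathbf{t}$) matrices each of bounded rank gives no useful bound on the mixture. The correct argument, which is the paper's entire proof of (1) and which you only gesture at in your last clause, bypasses per-gene-tree ranks completely: the pairing analysis of Section~\ref{pairing} establishes Equation~\ref{eq:cherry}, $p^*_{i_1i_2i_3i_4|(S,\boldsymbol{\tau})}=p^*_{i_1i_2i_4i_3|(S,\boldsymbol{\tau})}$, for the full coalescent mixture, so in $\text{Flat}_{ab|cd}(P^*_{(S,\boldsymbol{\tau})})$ the columns labeled by $cd$-indices $kl$ and $lk$ are identical for every $k\neq l$; with $\binom{\kappa}{2}$ such coincidences the rank is at most $\kappa^2-\binom{\kappa}{2}=\binom{\kappa+1}{2}$. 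Note also that your phrase ``the image of each summand lies in a common $\binom{\kappa+1}{2}$-dimensional space'' is not the right mechanism: the column spaces of different paired summands are different subspaces; what is preserved under summation and integration is the duplicated-column pattern, and that pattern, not a common image, is what caps the rank.
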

\begin{proof}
Suppose that $L_1 | L_2$ is a valid split for $S$, that is $L_1 | L_2 = ab|cd$. From our discussion in Section \ref{pairing} it is clear that for any distribution $P^*_{(S,\boldsymbol{\tau})}$ arising from the $\kappa$-state model $\mathcal{C}_{GTR(\kappa)}$ the entries $((i,j),(k,l))$ and $((i,j),(l,k))$ of the $\kappa^2 \times \kappa^2$ matrix $\text{Flat}_{ab | cd}(P^*_{(S,\boldsymbol{\tau})})$ are equal, for all $k \neq l \in [\kappa]$. In particular, the columns labeled by the $cd$-indices $kl$ and $lk$ for distinct $k , l \in [\kappa]$ are identical, e.g. columns labeled by 12 and 21 are equal.  Since there are $\binom{\kappa}{2}$ such pairs, we get that
$$\text{rank}(\text{Flat}_{ab | cd}(P^*_{(S,\boldsymbol{\tau})})) \leq \kappa^2-\binom{\kappa}{2} = \binom{\kappa+1}{2},$$
which establishes (1).

Now suppose that $L_1 | L_2$ is not a valid split for $S$.  Let $X_{ac|bd}$ and $X_{ad|bc}$  denote the sets of $(\binom{\kappa+1}{2} +1)$-minors of the  $\kappa^2 \times \kappa^2$ matrices  $\text{Flat}_{ac | bd}$ and $\text{Flat}_{ad | bc}$, respectively.  
Let  $V_{ac|bd}$ be the zero set of $X_{ac|bd}$ and $V_{ad|bc}$ be the zero set of $X_{ad|bc}$.  

Define $W$ as a zero set of real functions $f \circ \psi_{S}$ that are analytic on the connected set $U_S$, where $f$ vanishes on $V_{ac|bd} \cup V_{ad|bc}$. It is clear that $W$ is an analytic subvariety and  a subset of $U_{S}$. We are going to show that $W$ is a proper subvariety of $U_{S}$ by finding $u \in U_{S} \setminus W$ with $\psi_S(u) \notin V_{ac|bd} \cup V_{ad|bc}$. In particular, we show that the ranks of $\text{Flat}_{ac | bd}$ and $\text{Flat}_{ad | bc}$ are greater than $\binom{\kappa+1}{2}$ for this choice of parameters.

Recall that the continuous model parameters are the vector of speciation times $\boldsymbol{\tau}=(\tau_1, \tau_2, \tau_3)$ for $S$ as depicted in Figure \ref{hist}, the effective population size $\theta$, the matrix $\mathbf{Q}$ (the rates of mutation between states $\boldsymbol{\mu}=(\mu_1, \mu_2, \dots , \mu_{\frac{1}{2}\kappa(\kappa-1)})$ and relative frequencies of the states at equilibrium $\boldsymbol{\pi}=(\pi_1, \pi_2, \dots, \pi_\kappa)$), the proportion of invariable sites $\delta$, and the parameters $\alpha$ and $\beta$ for the discrete gamma distribution.
Note that for a fixed species tree $S$ the matrices  $\text{Flat}_{ac | bd}(P^*_{(S,\boldsymbol{\tau})})$ and $\text{Flat}_{ad | bc}(P^*_{(S,\boldsymbol{\tau})})$ will be identical, since for any observation at the leaves of $S$ for $i_1, i_2, i_3, i_4 \in [\kappa]$ we have $$p^*_{i_1 i_2 i_3 i_4 | (S,\boldsymbol{\tau})} = p^*_{i_1 i_2 i_4 i_3 |(S,\boldsymbol{\tau})}.$$

Thus, for simplicity let $F:=\text{Flat}_{ac | bd}(P^*_{(S,\boldsymbol{\tau})})=\text{Flat}_{ad | bc}(P^*_{(S,\boldsymbol{\tau})})$.  Notice that we can express flattening $F$ as $F = \frac{1}{m} \sum_{i = 1}^{m} F_{\rho_i}$, where $\rho_i>0$ is the rate for category $i \in [m]$ associated with the discrete gamma distribution and $F_{\rho_i}$ is the flattening associated with the fixed rate $\rho_i$.  Also, note that all entries in the matrix $F_{\rho_i}$ are positive. To differentiate between site pattern probabilities on the species tree and site pattern probabilities for a specific rate $\rho_i$ on the same species tree, we use the notation $p^*_{i_1 i_2 i_3 i_4 | (S,\boldsymbol{\tau})}$ and $p^{\rho_i,*}_{i_1 i_2 i_3 i_4 | (S,\boldsymbol{\tau})}$, respectively. Let the proportion of invariable sites be such that $\delta=0$. 
 Next, let $\mathbf{Q}$ be a generalized Jukes-Cantor matrix as described by Equation (\ref{jcmatrix}) in Section \ref{JC69} and let $\tau_2 = \tau_3$ and $\tau_1 = \frac{\tau_3}{10}$. Since $\tau_2 = \tau_3$ then the distributions  $P^*_{(S,\boldsymbol{\tau})}$ will be the same for both symmetric and asymmetric species trees. Furthermore, let $\tau_3 = 1$, $\theta = 0.1$, and $\mu_i = 0.1$ for all $i \in \{1, \dots , \frac{1}{2}\kappa(\kappa-1) \}$.

First let $\kappa =2$. By using precise formulas for the site pattern probabilities for the generalized Jukes-Cantor as described in Section \ref{JC69} and Supplement  A and the choice of parameters made above, we find that all principal minors of a symmetric matrix $F_{\rho_i}$ are positive for any $\rho_i >0$ (see {\em Mathematica} supplementary file for computations). This implies that  $F_{\rho_i}$ is positive definite matrix. Thus we conclude that  $F = \frac{1}{m} \sum_{i = 1}^{m} F_{\rho_i}$ is positive definite and hence invertible. In particular, we have that generically rank$(F) = \kappa^2$.

For $\kappa \geq 3$ we further specify parameters associated with the discrete gamma distribution. Let $\alpha =1$ and $\beta > \frac{1+\ln(m)}{2}$ for any fixed number $m$ of categories. First select all rows and columns from the matrix $F_{\rho_i}$ labeled by $ac, bd$-indices with distinct $a$ and $c$ (and with distinct $b$ and $d$). Denote this $\kappa(\kappa-1) \times \kappa(\kappa-1)$ submatrix by $F_{\rho_i}^*$. Note that for $\kappa \geq 4$, $\kappa(\kappa-1) > \binom{\kappa+1}{2}$, while for $\kappa = 3$ this is clearly false. Thus, at an appropriate point we will have two sub-cases: one for $\kappa \geq 4$ and one for $\kappa = 3$ (for which we will add an additional column and row to the submatrix  $F_{\rho_i}^*$). By the symmetry of the matrix $\mathbf{Q}$ the matrix $F_{\rho_i}^*$ will have six distinct entries for $\kappa \geq 4$ (see Supplement  A), 
$$p^{\rho_i,*}_{xxyy}, \ p^{\rho_i,*}_{xxyz}, \ p^{\rho_i,*}_{xyxy}, \ p^{\rho_i,*}_{xyxz}, \ p^{\rho_i,*}_{yzxx}, \ \text{and} \  p^{\rho_i,*}_{xyzw}. $$ 
Note that for $\kappa=3$ there is no entry $p^{\rho_i,*}_{xyzw}$.  The diagonal $(\kappa-1) \times (\kappa -1)$ blocks of $F_{\rho_i}^*$ will be of the following form,
{\small \begin{gather*}
\begin{pmatrix}
p^{\rho_i,*}_{xxyy}  & p^{\rho_i,*}_{xxyz} & p^{\rho_i,*}_{xxyz} & \cdots & p^{\rho_i,*}_{xxyz} \\
p^{\rho_i,*}_{xxyz} & p^{\rho_i,*}_{xxyy}  & p^{\rho_i,*}_{xxyz} &\cdots & p^{\rho_i,*}_{xxyz} \\
p^{\rho_i,*}_{xxyz} & p^{\rho_i,*}_{xxyz} & p^{\rho_i,*}_{xxyy}  &\cdots & p^{\rho_i,*}_{xxyz} \\
\vdots  & \vdots & \vdots  & \ddots & \vdots  \\
p^{\rho_i,*}_{xxyz} & p^{\rho_i,*}_{xxyz} & p^{\rho_i,*}_{xxyz} &\cdots & p^{\rho_i,*}_{xxyy}  
\end{pmatrix}.
\end{gather*}}
 
The off-diagonal $(\kappa-1) \times (\kappa -1)$  blocks of $F_{\rho_i}^*$ can be obtained from the block described below by appropriately permuting rows and columns.
{\small $$\bordermatrix{
				&12 			& 13 	 		& 14 			& 15 	 		& \ldots 	& 1\kappa\cr
                21&		p^{\rho_i,*}_{xyxy} 	& p^{\rho_i,*}_{xyxz} 	& p^{\rho_i,*}_{xyxz} 	& p^{\rho_i,*}_{xyxz}	& \ldots	&p^{\rho_i,*}_{xyxz}	\cr
                23& 		p^{\rho_i,*}_{xyxz}  	& p^{\rho_i,*}_{yzxx} 	& p^{\rho_i,*}_{xyzw}	& p^{\rho_i,*}_{xyzw}	& \ldots	&p^{\rho_i,*}_{xyzw}\cr
                24& 		p^{\rho_i,*}_{xyxz} 	& p^{\rho_i,*}_{xyzw} 	& p^{\rho_i,*}_{yzxx}	& p^{\rho_i,*}_{xyzw}	& \ldots	&p^{\rho_i,*}_{xyzw}\cr
                25		& p^{\rho_i,*}_{xyxz}  	& p^{\rho_i,*}_{xyzw} 	& p^{\rho_i,*}_{xyzw} 	& p^{\rho_i,*}_{yzxx}	& \ldots	&p^{\rho_i,*}_{xyzw}\cr
                26		& p^{\rho_i,*}_{xyxz}	& p^{\rho_i,*}_{xyzw} 	& p^{\rho_i,*}_{xyzw}	& p^{\rho_i,*}_{xyzw}	& \ldots	&p^{\rho_i,*}_{xyzw}\cr
                \vdots	& \vdots 		& \vdots	 	& \vdots		& \vdots		& \ddots	& \vdots \cr
                2\kappa	& p^{\rho_i,*}_{xyxz}  	& p^{\rho_i,*}_{xyzw}     & p^{\rho_i,*}_{xyzw}	& p^{\rho_i,*}_{xyzw}	& \ldots	&p^{\rho_i,*}_{yzxx}}. $$}
The submatrix $F_{\rho_i}^*$ is symmetric. In addition, it is straightforward to show that row (column) sums are all the same and they are equal to
\begin{equation}
p^{\rho_i,*}_{xxyy} + p^{\rho_i,*}_{xyxy} + (\kappa - 2)(p^{\rho_i,*}_{xxyz} + 2p^{\rho_i,*}_{xyxz}+p^{\rho_i,*}_{yzxx}+(\kappa - 3)p^{\rho_i,*}_{xyzw}).
\end{equation}
Recall that $\alpha=1$ and $\beta >  \frac{1+\ln(m)}{2}$, where $m$ is the number of categories. From our earlier discussion this implies that $0<\rho_1 < \rho_2 < \cdots < \rho_m <2$.
First, let $\kappa \geq 4$. To show diagonal dominance of the matrix $F_{\rho_i}^*$ we need to show that 
\begin{equation}\label{diagonal}
p^{\rho_i,*}_{xxyy} -( p^{\rho_i,*}_{xyxy} + (\kappa - 2)(p^{\rho_i,*}_{xxyz} + 2p^{\rho_i,*}_{xyxz}+p^{\rho_i,*}_{yzxx}+(\kappa - 3)p^{\rho_i,*}_{xyzw})) > 0.
\end{equation}
With the choice of parameters made above and using precise formulas for the site pattern probabilities for the generalized Jukes-Cantor model as described in Section \ref{JC69} and Supplement  A we show that Equation (\ref{diagonal}) can be simplified to the following form
\begin{equation}\label{diagonal2}
\frac{f_1(\rho_i)}{\kappa^2}+\frac{f_2(\rho_i)}{\kappa^3}+\frac{f_3(\rho_i)}{\kappa^4},
\end{equation}
where each $f_j$ is the function of $\rho_i$. We find that if $\rho_i \in(0,2)$ then $f_j(\rho_i) >0$ for $j \in \{1,2,3\}$. Therefore, Equation (\ref{diagonal}) is strictly positive and we conclude that $F_{\rho_i}^*$ is strictly diagonally dominant for any $\rho_i \in(0,2)$ and hence positive definite. This implies that $F^* = \frac{1}{m} \sum_{i = 1}^{m} F_{\rho_i}^*$ is positive definite and invertible. Thus, generically rank$(F^*) = \kappa(\kappa -1)$  (see {\em Mathematica} supplementary files for computations).
 
Now, let $\kappa=3$. In this case the submatrix $F_{\rho_i}^*$ is $6 \times 6$ thus we need to add at least one row and column from the flattening $F_{\rho_i}$. Choose the last row and the last column from $F_{\rho_i}$ labeled by $\kappa \kappa$-index and insert it as the $7^{th}$ row and column into $F_{\rho_i}^*$ by removing entries with 11 and 22 indices. Call this new submatrix $F_{\rho_i}^{**}$. Again, using parameters chosen above and precise formulas we find that for any $\rho_i \in (0,2)$ the $6 \times 6$ principal submatrix of  $F_{\rho_i}^{**}$, which is just $F_{\rho_i}^*$, is strictly diagonally dominant and also we show that det$(F_{\rho_i}^{**}) > 0$.  Now, this implies that all principal minors of $F_{\rho_i}^{**}$ are positive for any $\rho_i \in (0,2)$ (see {\em Mathematica} supplementary file for computations). Thus, we conclude that  $F^{**} = \frac{1}{m} \sum_{i = 1}^{m} F_{\rho_i}^{**}$ is positive definite and hence invertible. In particular, generically rank$(F^{**}) = \kappa (\kappa -1) +1$. Note that the interval for $\rho_i$ is not tight, we simply chose the smallest interval to accommodate all cases.
 
Since there exists at least one parameter choice in $U_{S}$ that does not lie in $W$, then $W$ is a proper analytic subvariety of $U_{S}$ for a 4-leaf species tree $S$ and hence of dimension strictly less than the dimension of $U_{S}$. We conclude that for a non-valid split $L_1|L_2$ $$\textup{rank}(\textup{Flat}_{L_1 | L_2}(P^*_{(S,\boldsymbol{\tau})})) > \binom{\kappa+1}{2},$$  for generic  distributions $P^*_{(S,\boldsymbol{\tau})}$ arising from the model, establishing (2) . 
\end{proof}
\begin{rem}
Note that the proof of Theorem \ref{thm:quartets}(2) actually establishes that if $L_1 | L_2$ is not a valid split for $S$, then for generic distributions $P^*_{(S,\boldsymbol{\tau})}$ arising from the model and $\kappa \geq 4$
$$\textup{rank}(\textup{Flat}_{L_1 | L_2}(P^*_{(S,\boldsymbol{\tau})})) \geq \kappa(\kappa-1).$$
\end{rem}

\begin{rem}
Recall that the probability distribution $P^*_{(S,\boldsymbol{\tau})}$ arises from a fixed 4-leaf species tree topology $S$. Thus, we can use the vanishing of the the $(\binom{\kappa+1}{2}+1)$-minors  of the $\textup{Flat}_{L_1 | L_2}(P^*_{(S,\boldsymbol{\tau})})$ to identify $S$ for generic parameters.  In particular, using the notation of the Theorem \ref{thm:quartets}, we conclude that the unrooted 4-leaf species tree $S$ is identifiable from $P^*_{(S,\boldsymbol{\tau})} = \psi_S(u)$ for any parameters $u \in U_{S}\setminus W$.
\end{rem}
In the single 4-leaf gene tree case (for the general $\kappa$-state Markov model) under the molecular clock assumption, even though the columns of the flattening labeled by the $cd$-indices $kl$ and $lk$ for distinct $k , l \in [\kappa]$ are identical, the rank of the flattening for a valid split is less than or equal to $\kappa$. Thus, one might wonder if the rank of the flattening for the species tree under the coalescent model might be less than or equal to $\kappa$ for all distributions arising from the model $\mathcal{C}_{GTR(\kappa)}$. For a valid split and using our computations for JC69 $\kappa$-state model under the coalescent (Section \ref{JC69}) we have found several $(\kappa+1) \times (\kappa+1)$ minors for $\kappa = 4$ that do not vanish identically. 

Now we turn our attention to the identifiability of the unrooted $n$-taxon species tree from the induced quartets. 
\begin{rem}
Let $S_n$ be an $n$-taxon species tree. We are going to show that the distribution for $S_n$ can be marginalized to $S_{n-1}$. We demonstrate the idea on the 5-taxon species tree $S_5$; the $n$-taxon case follows easily from the one described below. 
\end{rem}
\begin{figure}[h]
\centering
\includegraphics[scale=0.35]{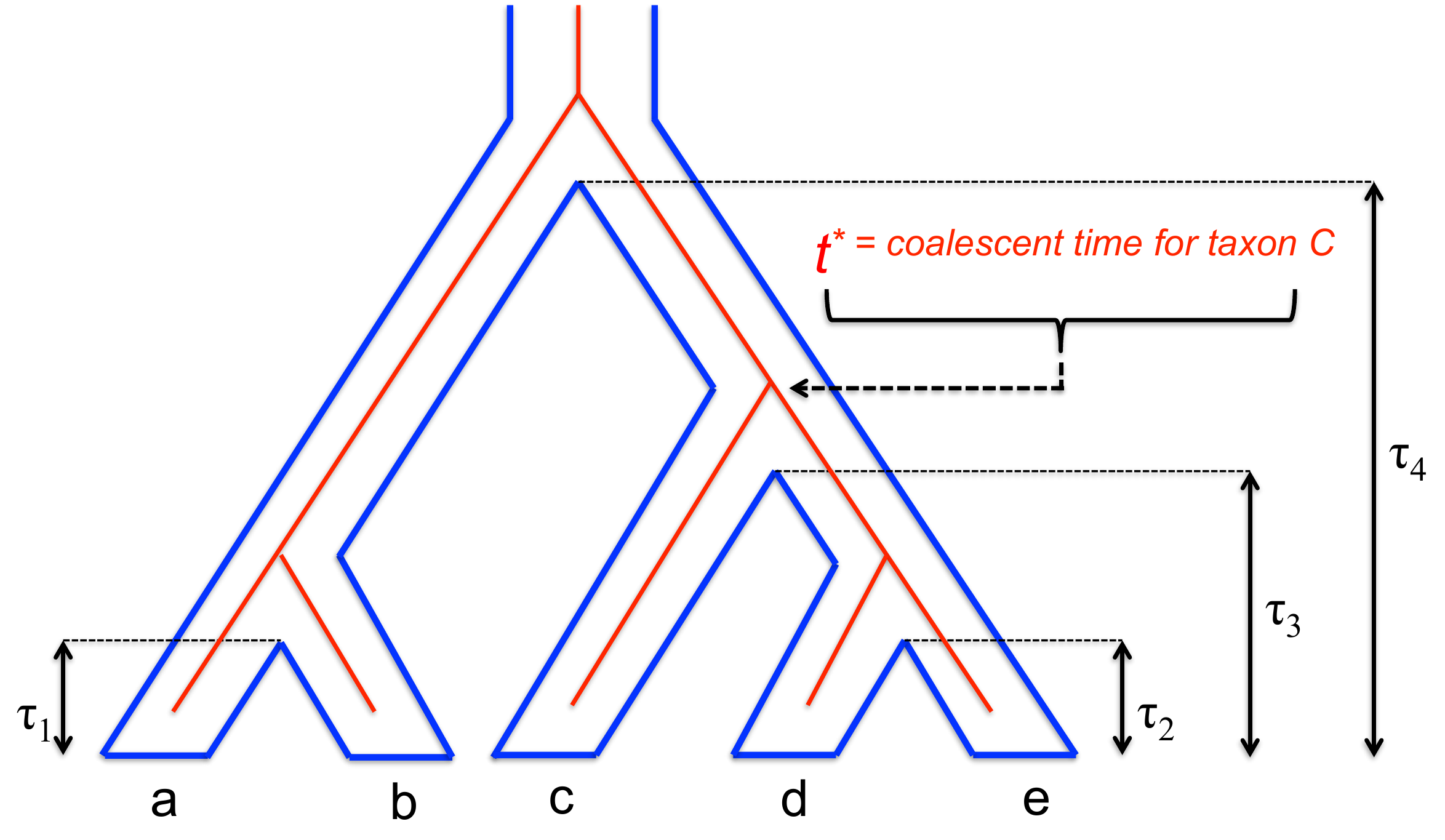}
\caption{A 5-taxon species tree with one example gene tree embedded.}
\label{5leaf}
\end{figure}
Let $\mathcal{Q}(S_5)$ be a collection of 4-leaf species trees that are induced by $S_5$. Consider $S_5 = ((a,b),(c,(d,e)))$ as in Figure \ref{5leaf} and let $S \in \mathcal{Q}(S_5)$ be $((a,b),(d,e))$. We would like to show that we can express each site pattern probability for an induced quartet tree $S$ as,
\begin{equation}\label{eq:induced_pattern}
p^*_{i_1 i_2 i_4 i_5 |(S,\boldsymbol{\tau})} = \sum_{x \in [\kappa]} p^*_{i_1 i_2 x i_4 i_5 |(S_5,\boldsymbol{\tau_\star})}.
\end{equation}
In the equation above $\boldsymbol{\tau_\star} = (\tau_1,\tau_2,\tau_3,\tau_4)$ denotes the speciation times for $S_5$ (Figure \ref{5leaf}) and $\boldsymbol{\tau} = (\tau_1, \tau_2, \tau_4)$ denotes the speciation times for the induced quartet tree $S$.  Let $t^*$ be the coalescent time for taxon $C$ and define $\mathbf{t} := \mathbf{t_\star} \setminus t^*$, where $\mathbf{t_\star}$ is a vector of coalescent times for a 5-taxon gene tree $G_5$ conditional on $S_5$. Thus, $\mathbf{t}$ is a vector of coalescent times for a 4-taxon gene tree $G$ conditional on $S$.  Let $\mathcal{G}_4$ and $\mathcal{G}_5$ be sets of all metric gene trees $(G, \mathbf{t})$ and  $(G_5, \mathbf{t_\star})$, respectively. Now, using Equation \ref{eq:site.pattern.prob.coal} as applied to a 5-taxon species tree $S_5$, the right-hand side of Equation  \ref{eq:induced_pattern} is equal to,

{\small 
\begin{align}\label{eq:calculation1}
\nonumber \sum_{x \in [\kappa]} p^*_{i_1 i_2 x i_4 i_5 |(S_5,\boldsymbol{\tau_\star})} 
\nonumber	&= \sum_{x \in [\kappa]} \sum_{(G_{5},\mathbf{t_\star})\in\mathcal{G}_5}\int_{\mathbf{t_\star}} p_{\sigma(i_1 i_2 x  i_4 i_5 ) |(G_{5},\mathbf{t_\star})} f{((G_{5},\mathbf{t_\star})|(S_5,\boldsymbol{\tau_\star}))}d\mathbf{t_\star}\\
\nonumber	&=\sum_{(G_{5},\mathbf{t_\star})\in\mathcal{G}_5}\int_{\mathbf{t_\star}} \sum_{x \in [\kappa]} p_{\sigma(i_1 i_2 x  i_4 i_5 ) |(G_{5},\mathbf{t_\star})} f((G_{5},\mathbf{t_\star})|(S_5,\boldsymbol{\tau_\star}))d\mathbf{t_\star}\\
	&=\sum_{(G_{5},\mathbf{t_\star})\in\mathcal{G}_5}\int_{\mathbf{t_\star}} p_{\sigma(i_1 i_2 i_4 i_5 ) |(G,\mathbf{t})} f((G_{5},\mathbf{t_\star})|(S_5,\boldsymbol{\tau_\star}))d\mathbf{t_\star}\\
\nonumber	&=\sum_{(G,\mathbf{t})\in\mathcal{G}_4}\sum_{(G,\mathbf{t}) | (G_{5},\mathbf{t_\star})\in\mathcal{G}_5}\int_{\mathbf{t_\star}} p_{\sigma(i_1 i_2 i_4 i_5 ) |(G,\mathbf{t})} f((G_{5},\mathbf{t_\star})|(S_5,\boldsymbol{\tau_\star}))d\mathbf{t_\star}.
\end{align}}
The second sum in the last expression, $\sum_{(G,\mathbf{t}) | (G_{5},\mathbf{t_\star})\in\mathcal{G}_5}$, means that for each 4-taxon gene tree $(G,\mathbf{t})$ we sum over all placements of the $5^{\text{th}}$ taxon. Essentially, we replace a sum over all five-taxon trees with a sum over all four-taxon trees that includes, for each four-taxon tree, a sum over all placements of the fifth taxon. Define $B^*$ to be the branch of $S_5$ on which $t^*$ occurs, and let $B$ be the set of all branches of $S_5$ regardless of where $t^*$ occurs. Then according to  Equation \ref{gtd} we can write the gene tree density given $S_5$ as
\begin{align}\label{gtd_5leaf}
\nonumber f((G_{5},\mathbf{t_\star})|(S_5, \boldsymbol{\tau_\star}))
&= \prod_{b\in B} f_{P_b} (t_{u_b}, t_{u_{b}-1}, \ldots t_{v_b+1})\\ 
&=\left(\prod_{b\in B \setminus B^*}  f_{P_b} (\mathbf{t}) \right) \cdot  f_{P_{B^*}}(\mathbf{t},t^*).
\end{align} 
Substituting \ref{gtd_5leaf} into the last expression of \ref{eq:calculation1} we get
{\small
\begin{align*}
\sum_{x \in [\kappa]} &p^*_{i_1 i_2 x i_4 i_5 |(S_5,\boldsymbol{\tau_\star})} \\
&=\sum_{(G,\mathbf{t})\in\mathcal{G}_4}\sum_{{(G,\mathbf{t}) | (G_{5},\mathbf{t_\star})\in\mathcal{G}_5}}\int_{\mathbf{t_\star}} p_{\sigma(i_1 i_2 i_4 i_5 ) |(G,\mathbf{t})} f((G_{5},\mathbf{t_\star})|(S_5,\boldsymbol{\tau_\star}))d\mathbf{t_\star}\\
&=\sum_{(G,\mathbf{t})\in\mathcal{G}_4}\sum_{{(G,\mathbf{t}) | (G_{5},\mathbf{t_\star})\in\mathcal{G}_5}}\int_{\mathbf{t}} \int_{t^*} p_{\sigma(i_1 i_2 i_4 i_5 ) |(G,\mathbf{t})}\left(\prod_{b\in B \setminus B^*}  f_{P_b} (\mathbf{t})\right)f_{P_{B^*}}(\mathbf{t},t^*)dt^*d\mathbf{t}\\
&=\sum_{(G,\mathbf{t})\in\mathcal{G}_4}\int_{\mathbf{t}}p_{\sigma(i_1 i_2 i_4 i_5 ) |(G,\mathbf{t})}\prod_{b\in B \setminus B^*}  f_{P_b} (\mathbf{t})\left(\sum_{{(G,\mathbf{t}) | (G_{5},\mathbf{t_\star})}\in\mathcal{G}_5}\int_{t^*}  f_{P_{B^*}}(\mathbf{t},t^*)dt^*\right)d\mathbf{t}\\
&=\sum_{(G,\mathbf{t})\in\mathcal{G}_4}\int_{\mathbf{t}}p_{\sigma(i_1 i_2 i_4 i_5 ) |(G,\mathbf{t})}f((G,\mathbf{t})|(S, \boldsymbol{\tau}))d\mathbf{t}\\
&=p^*_{i_1 i_2 i_4 i_5 |(S,\boldsymbol{\tau})}.
\end{align*}}
Let $U_{S_5}$ denote the continuous parameter space for the species tree $S_5$ and $U_S$ the continuous parameter space for the induced 4-leaf tree $S$. From the argument above it is easy to see that we have a commutative diagram of analytic maps with $\alpha_{S}$ surjective and $\beta_{S}$ a marginalization map, e.g. we sum over indices for taxon C (Figure \ref{5leaf} and Equation \ref{eq:induced_pattern}).
\begin{equation*}
  \xymatrix@R+2em@C+2em{
  U_{S_5}\ar[r]^-{\psi_{S_5}} \ar[d]_-{\alpha_S} & \Delta^{\kappa^5-1} \ar[d]^-{\beta_S} \\
   U_S \ar[r]_-{\psi_S} & \Delta^{\kappa^4-1}
  }
 \end{equation*}
It is straightforward to extend all of these ideas to the $n$-taxon case. By applying an argument similar to that of {\em Corollary 3} on page 1109 in Allman and Rhodes (2006) \cite{allmanrhodes2006}, we get the following result.
\begin{cor}\label{cor:bigtrees}
Let $\mathcal{C}_{S_n}$ denote the coalescent phylogenetic model for the $n$-taxon species tree $S_n$. Then the unrooted species tree $S_n$ is identifiable for generic parameters.
\end{cor}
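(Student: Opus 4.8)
The plan is to reduce identifiability of the $n$-leaf tree to the already-established identifiability of its induced four-leaf quartets, exploiting the marginalization diagram constructed immediately above the statement. First I would invoke the classical combinatorial fact that an unrooted binary tree on $n$ leaves is uniquely determined by the collection of its induced quartet topologies: for every $4$-element subset $Q \subseteq \mathcal{L}$, the restriction $S_n|_Q$ is a single unrooted quartet tree, and the full set of $\binom{n}{4}$ such quartet topologies pins down $S_n$ uniquely. Hence it suffices to show that, for generic $u \in U_{S_n}$, each induced quartet topology can be read off from the distribution $\psi_{S_n}(u)$.

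Second, I would fix a $4$-subset $Q$ and set $S := S_n|_Q$. The marginalization computation established above (culminating in Equation \ref{eq:induced_pattern} and the commutative diagram) shows that the marginal of $\psi_{S_n}(u)$ onto the coordinates indexed by $Q$ equals $\psi_S(\alpha_S(u))$; that is, it is exactly a distribution arising from the coalescent model $\mathcal{C}_S$ on the four-leaf tree $S$. By Theorem \ref{thm:quartets} together with the two subsequent remarks, there is a proper analytic subvariety $W_S \subset U_S$ such that for every $\alpha_S(u) \in U_S \setminus W_S$ the three flattening ranks correctly distinguish the valid split from the two non-valid splits, so the unrooted topology of $S$ is recovered from the marginal distribution.

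Third, I would transport this genericity statement back up to $U_{S_n}$. Since $W_S$ is cut out by analytic functions on $U_S$ and $\alpha_S$ is analytic, the preimage $\alpha_S^{-1}(W_S)$ is cut out by the compositions of those functions with $\alpha_S$ and is therefore an analytic subvariety of $U_{S_n}$. Because $\alpha_S$ is surjective, there exists $u$ with $\alpha_S(u) \notin W_S$, so $\alpha_S^{-1}(W_S)$ is a \emph{proper} subvariety of the connected, full-dimensional set $U_{S_n}$ and hence of Lebesgue measure zero, by the analytic-function fact recalled in Section \ref{background}. Taking the finite union $W_n := \bigcup_{Q} \alpha_{S_n|_Q}^{-1}(W_{S_n|_Q})$ over all $\binom{n}{4}$ quartets yields a finite union of measure-zero subvarieties, which is again a proper subvariety of measure zero. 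For every $u \in U_{S_n} \setminus W_n$, all induced quartet topologies are simultaneously recovered from $\psi_{S_n}(u)$, and hence $S_n$ itself is identified, establishing generic identifiability; this is precisely the structure of the argument in \emph{Corollary 3} of Allman and Rhodes (2006) \cite{allmanrhodes2006}, adapted to the coalescent parameterization.

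The main obstacle is the genericity bookkeeping through the marginalization maps: one must ensure that each quartet's ``bad'' parameter set, which lives naturally in the small space $U_S$, pulls back to a genuinely proper (rather than accidentally all of $U_{S_n}$) subvariety upstream. This is exactly what surjectivity of $\alpha_S$ supplies, and it is the crux of the argument. A secondary point to verify is that the marginal model on $Q$ is literally the coalescent model $\mathcal{C}_S$ on the induced quartet, so that Theorem \ref{thm:quartets} applies verbatim rather than to some larger family; this is the content of the displayed calculation preceding the corollary and can be invoked directly.
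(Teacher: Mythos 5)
Your proposal is correct and takes essentially the same approach as the paper: marginalize the $n$-taxon distribution onto each induced quartet via the commutative diagram and Equation~\ref{eq:induced_pattern}, apply Theorem~\ref{thm:quartets} to identify each quartet topology, and assemble the tree from its quartets with genericity bookkeeping in the style of \emph{Corollary 3} of Allman and Rhodes (2006) \cite{allmanrhodes2006}. The only difference is that you spell out the pullback of the bad parameter sets through the surjective analytic maps $\alpha_S$ and the finite union over quartets, details the paper delegates to the citation.
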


\section{Generalized Jukes-Cantor Coalescent $\kappa$-state Model}\label{JC69}
To show generic identifiability in Theorem \ref{thm:quartets} we have used precise formulas for the generalized Jukes-Cantor $\kappa$-state model under the coalescent for a $4$-taxon species tree. In this section we describe computations of the JC69 model for a symmetric and asymmetric $4$-leaf species trees.
 
As was mentioned in Section \ref{mprocess}, if  we let the rates of mutation  and relative frequencies of the states at equilibrium be $\mu:=\mu_1 =  \dots = \mu_{\frac{1}{2}\kappa(\kappa-1)}$ and $\boldsymbol{\pi}=(\frac{1}{\kappa},\frac{1}{\kappa}, \cdots ,\frac{1}{\kappa})$ in equation (\ref{qmatrix}),  respectively, then the resulting model is the Mk model as described in \cite{lewis2001}, which is a generalized Jukes-Cantor $\kappa$-state model. In addition, let the proportion of invariable sites, $\delta$, be 0.
The $\kappa \times \kappa$ instantaneous rate matrix $\mathbf{Q}$ for the $\kappa$-state JC69 model is
\begin{center}
{\small 
\begin{gather}
\rho_i \mathbf{Q} =\rho_i \alpha
\begin{pmatrix}\label{jcmatrix}
  1-\kappa & 1 & \cdots & 1 \\
  1 & 1-\kappa & \cdots & 1 \\
  \vdots  & \vdots  & \ddots & \vdots  \\
  1 & 1 & \cdots & 1-\kappa
 \end{pmatrix},
\end{gather}
}
\end{center}
where $\alpha = \frac{\mu}{\kappa}$ is the instantaneous rate of any transition between states and $\rho_i > 0$ is the rate of evolution at a site for category $i \in [m]$ associated with the discrete gamma distribution. The transition probability matrix $\mathbf{P}^{\rho_i}(t)=e^{\mathbf{Q}\rho_i t}$ takes the form, 
\begin{eqnarray}\label{jc69}
\mathbf{P}^{\rho_i}_{ij}(t) & = &\left\{
\begin{array}{ll}
\frac{1}{\kappa} + \frac{\kappa-1}{\kappa} \mbox{e}^{-\rho_i \mu t } & i=j, \\
\vspace{0.005in}\\
\frac{1}{\kappa} - \frac{1}{\kappa} \mbox{e}^{ -\rho_i \mu t } & i \neq j. 
\end{array}\right.
\end{eqnarray}

The symmetry of the Jukes-Cantor model makes it possible to write each site pattern probability for a 4-leaf species tree concisely for any  $\kappa \geq 2$. 

\subsection{Site pattern probabilities: gene tree}

Consider a 4-leaf unrooted gene tree labeled as in Figure \ref{4leafgenetree}.  Recall that the site pattern probability for a particular observation $i_1 i_2 i_ 3 i_4$, $i_j \in [\kappa]$, and the branch length $v_e$, $e \in\{1,2,3,4,5\}$  is given by
\begin{equation}\label{genejc69}
p_{i_1 i_2 i_3 i_4}^{\rho_i} = \sum_{r=1}^{\kappa} \sum_{s=1}^{\kappa} \pi_r \mathbf{P}^{\rho_i}_{r i_1}(v_1) \mathbf{P}^{\rho_i}_{r i_2}(v_2) \mathbf{P}^{\rho_i}_{rs}(v_3) \mathbf{P}^{\rho_i}_{s i_3} (v_4)\mathbf{P}^{\rho_i}_{s i_4}(v_5).
\end{equation}
To make notation simpler, we let $p^{e}_{ii}:=\mathbf{P}^{\rho_i}_{ii}(v_e)$ be the probability of no change in the state over a time interval of length $v_e$ and  let $p^{e}_{ij}:=\mathbf{P}^{\rho_i}_{ij}(v_e)$ be the probability of a state change along the edge $e\in\{1,2,3,4,5\}$ with length $v_e$. Then the probability of the pattern $xxxx$, where $x \in [\kappa]$ is

{\footnotesize 
\begin{flalign}\label{genexxxx}
p_{xxxx}^{\rho_i}&=\frac{1}{\kappa}(p^{1}_{ii}p^{2}_{ii}p^{3}_{ii}p^{4}_{ii}p^{5}_{ii}+(\kappa-1)p^{1}_{ii}p^{2}_{ii}p^{3}_{ij}p^{4}_{ij}p^{5}_{ij}+(\kappa-1)p^{1}_{ij}p^{2}_{ij}p^{3}_{ij}p^{4}_{ii}p^{5}_{ii} \\ \nonumber
&+(\kappa-1)p^{1}_{ij}p^{2}_{ij}p^{3}_{ii}p^{4}_{ij}p^{5}_{ij}+(\kappa-1)(\kappa-2)p^{1}_{ij}p^{2}_{ij}p^{3}_{ij}p^{4}_{ij}p^{5}_{ij}).
\end{flalign}}
Indeed, this is easily computed by observing that when $r=s=x$ in Equation (\ref{genejc69}) then we will have exactly one term of the form $$p^{1}_{ii}p^{2}_{ii}p^{3}_{ii}p^{4}_{ii}p^{5}_{ii}.$$ For $r=s\neq x$ we will have  $(\kappa-1)$ terms of the form $$p^{1}_{ij}p^{2}_{ij}p^{3}_{ii}p^{4}_{ij}p^{5}_{ij}.$$ The other three cases for $r \neq s$ are as follows:
\begin{itemize}
\item  $r=x$ and $s \in [\kappa]\backslash \{r\}$, then we have $(\kappa-1)$ terms of the form $p^{1}_{ii}p^{2}_{ii}p^{3}_{ij}p^{4}_{ij}p^{5}_{ij}$;
\vspace{0.01in}
\item  $s=x$ and $r \in [\kappa]\backslash \{s\}$, then we have $(\kappa-1)$ terms of the form $p^{1}_{ij}p^{2}_{ij}p^{3}_{ij}p^{4}_{ii}p^{5}_{ii}$;
\vspace{0.01in}
\item  $r,s \neq x$, then we have $(\kappa-1)(\kappa-2)$ terms of the form $p^{1}_{ij}p^{2}_{ij}p^{3}_{ij}p^{4}_{ij}p^{5}_{ij}$.
\end{itemize}
The complete list of all site pattern probabilities on the single gene tree can be found in the Supplement  A. 

\subsection{Site pattern probabilities: species tree}

Now we are ready to describe computations for two 4-leaf species trees,  the symmetric tree $((a,b),(c,d))$ and the asymmetric tree $(a,(b,(c,d)))$, under the generalized JC69 coalescent $\kappa$-state model. For these species trees we have 25 gene tree pairs $(G,\mathbf{t})$ for a symmetric species tree and 34 gene tree pairs $(G,\mathbf{t})$ for an asymmetric species tree. Vectors $\boldsymbol{\tau}=(\tau_1,\tau_2,\tau_3)$ and $\mathbf{t}=(t_1,t_2,t_3)$ denote speciation and coalescent times, respectively. 
The gene tree $(G,\mathbf{t})$ in Figure \ref{histA}, call it  $(G_A, \mathbf{t})$, is symmetric and is in agreement with the tree $S$.  It has one coalescent event happening along the branch of length $\tau_3-\tau_1$ and two other events above the root of the symmetric species tree $S$. In contrast, the gene tree in Figure \ref{histB}, call it $(G_B, \mathbf{t})$, is asymmetric with lineages $A$ and $C$ being sister leaves and all coalescent events happening above the root of the asymmetric species tree $S$. Using Equation (\ref{gtd}) we compute the gene tree densities for $(G_A, \mathbf{t})$ and $(G_B, \mathbf{t})$.

\paragraph{Gene tree density for $(G_A, \mathbf{t})$}
\begin{flalign*}
f((G_A,\mathbf{t})|(S,\boldsymbol{\tau}))&= \Big(\frac{2}{\theta}\Big)^3 \mbox{e}^{\frac{-2} {\theta}t_1} \mbox{e}^{\frac{-6}{\theta} t_2} \mbox{e}^{\frac{-2}{\theta} (t_3-t_2)}\mbox{e}^{\frac{-2}{\theta} (\tau_3-\tau_1)}.
\end{flalign*}    
 
 \paragraph{Gene tree density for $(G_B, \mathbf{t})$} 
 \begin{equation*}
f((G_B,\mathbf{t})|(S,\boldsymbol{\tau}))  = \Big(\frac{2}{\theta}\Big)^ 3 \mbox{e}^{\frac{-2}{\theta}(\tau_2- \tau_1)} \mbox{e}^{\frac{-6}{\theta}(\tau_3 - \tau_2)} \mbox{e}^{\frac{-12 }{\theta}t_1} \mbox{e}^{\frac{-6}{\theta}(t_2 - t_1)} \mbox{e}^{\frac{-2}{\theta}(t_3 - t_2)}. 
\end{equation*}

The site pattern probability for the pattern $xxxx$ and gene trees $(G_A, \mathbf{t})$ and $(G_B, \mathbf{t})$ on symmetric (Figure \ref{histA})  and asymmetric (Figure \ref{histB}) species trees is now easily computed by appropriately substituting gene tree branch lengths as listed in Figure \ref{hist} into Equations (\ref{jc69}), (\ref{genejc69}), and (\ref{genexxxx}), multiplied by the gene tree density function and integrated with respect to $\mathbf{t}$. Notice the different limits of integration for each gene tree. As was mentioned previously, the limits depend on the location of coalescent events:

\paragraph{Figure \ref{histA}}
\begin{equation*}
p^{\rho_i,*}_{xxxx | ((G_A,\mathbf{t})|(S,\boldsymbol{\tau}))}= \int_0^{\infty} \int_0^{t_3} \int_0^{\tau_3-\tau_2} p_{xxxx | (G_A,\mathbf{t})}^{\rho_i} \cdot f{((G_A,\mathbf{t})|(S,\boldsymbol{\tau}))}  dt_1 dt_2 dt_3.
\end{equation*}
\paragraph{Figure \ref{histB}}
\begin{equation*}
p^{\rho_i,*}_{xxxx | ((G_B,\mathbf{t})|(S,\boldsymbol{\tau}))}= \int_0^{\infty} \int_0^{t_3} \int_0^{t_2} p_{xxxx | (G_B,\mathbf{t})}^{\rho_i}  \cdot f((G_B,\mathbf{t})|(S,\boldsymbol{\tau}))  dt_1 dt_2 dt_3.
\end{equation*}

We perform these computations for all 25 gene tree pairs for the symmetric species tree and for all 34 pairs for the asymmetric species tree, sum them up and arrive at the following site pattern probabilities for observation $xxxx$. The complete list for all patterns can be found in Supplement  A. We list site pattern probabilities in the parameterized form for a cleaner output.  For each $i \in \{1,2,3\}$ let $x_{i}=\mbox{e}^{-\tau_{i}}$, then
\vspace{0.07in}

\noindent {\em (1) Symmetric species 4-leaf tree:}
{\footnotesize \begin{align*}
p^{\rho_i,*}_{xxxx | (S,\boldsymbol{\tau})}&=\frac{1}{\kappa^4}+\frac{(\kappa-1) x_1^{2 \rho_i \mu}}{\kappa^4 (1+\rho_i \mu \theta)}+\frac{(\kappa-1) x_2^{2\rho_i \mu}}{\kappa^4 (1+\rho_i \mu \theta)}+\frac{(\kappa-1)^2 x_1^{2\rho_i \mu} x_2^{2\rho_i \mu}}{\kappa^4 (1+\rho_i \mu \theta)^2}+\frac{4 (\kappa-1) x_3^{2\rho_i \mu}}{\kappa^4 (1+\rho_i \mu \theta)}\\
&+\frac{4 (\kappa-2) (\kappa-1) x_1^{\rho_i \mu} x_3^{2\rho_i \mu}}{\kappa^4 (1+\rho_i \mu \theta) (2+\rho_i \mu \theta)}+\frac{4 (\kappa-2) (\kappa-1) x_2^{\rho_i \mu} x_3^{2\rho_i \mu}}{\kappa^4 (1+\rho_i \mu \theta) (2+\rho_i \mu \theta)}+\frac{4 (\kappa-2)^2 (\kappa-1) x_1^{\rho_i \mu} x_2^{\rho_i \mu} x_3^{2\rho_i \mu}}{\kappa^4 (1+\rho_i \mu \theta) (2+\rho_i \mu \theta)^2}\\
&+\frac{2 (\kappa-1)\rho_i \mu \theta (\kappa+\rho_i \mu \theta) (\kappa+(\kappa-1)\rho_i \mu \theta) x_1^{\frac{-2}{\theta}} x_2^{\frac{-2}{\theta}} x_3^{4 \left(\rho_i \mu+\frac{1}{\theta}\right)}}{\kappa^4 (1+ \rho_i \mu \theta)^2 (2+\rho_i \mu \theta)^2 (3+\rho_i \mu \theta)}.
\end{align*}}

\noindent {\em (2) Asymmetric species 4-leaf tree:}
{\footnotesize \begin{align*}
p^{\rho_i,*}_{xxxx | (S,\boldsymbol{\tau})}&=\frac{1}{\kappa^4}+\frac{(\kappa-1) x_1^{2 \rho_i \mu}}{\kappa^4 (1+\rho_i \mu \theta)}+\frac{2 (\kappa-1) x_2^{2\rho_i \mu}}{\kappa^4 (1+\rho_i \mu \theta)}+\frac{2 (\kappa-2) (\kappa-1) x_1^{\rho_i \mu} x_2^{2\rho_i \mu}}{\kappa^4 (1+\rho_i \mu \theta) (2+\rho_i \mu \theta)}\\
&+\frac{3 (\kappa-1) x_3^{2\rho_i \mu}}{\kappa^4 (1+\rho_i \mu \theta)}+\frac{2 (\kappa-2) (\kappa-1) x_1^{\rho_i \mu} x_3^{2\rho_i \mu}}{\kappa^4 (1+\rho_i \mu \theta) (2+\rho_i \mu \theta)}+\frac{(\kappa-1)^2 x_1^{2\rho_i \mu} x_3^{2\rho_i \mu}}{\kappa^4 (1+\rho_i \mu \theta)^2}\\
&+\frac{4 (\kappa-2) (\kappa-1) x_2^{\rho_i \mu} x_3^{2\rho_i \mu}}{\kappa^4 (1+\rho_i \mu \theta) (2+\rho_i \mu \theta)}+\frac{4 (\kappa-2)^2 (\kappa-1) x_1^{\rho_i \mu} x_2^{\rho_i \mu} x_3^{2\rho_i \mu}}{\kappa^4 (1+\rho_i \mu \theta) (2+\rho_i \mu \theta)^2}\\
&+\frac{2 (\kappa-1) \rho_i \mu \theta (k+\rho_i \mu \theta) (k+(\kappa-1)\rho_i \mu \theta) x_1^{\frac{-2}{\theta}} x_2^{2 \left(\rho_i \mu+\frac{1}{\theta}\right)} x_3^{2\rho_i \mu}}{\kappa^4 (1+\rho_i \mu \theta)^2 (2+\rho_i \mu \theta)^2 (3+\rho_i \mu \theta)}.
\end{align*}}

\section{Species Tree Inference}
The results presented here can be used to develop methodology for inferring species trees given DNA sequence data at the tips of the tree in cases where the coalescent model is appropriate.  Indeed, we have recently developed and done some preliminary testing with one such method \cite{chifmankubatko2014}, and we briefly describe the basic ideas behind the method here.  These ideas are modeled after the work of Eriksson \cite{Eriksson2005} for the single gene case. In particular, consider a data set consisting of $R$ unlinked single nucleotide polymorphisms (SNPs) for a collection of $n$ species.   Each SNP site is assumed to have its own gene tree, and by ``unlinked'' we mean that SNPs on the same chromosome are located far enough apart that their gene trees are independent given the species tree.  In other words, we assume a data set of $R$ observations arising under the model in expression (\ref{map}).

For a given subset of four species from the $n$ species under study, we can consider the three possible splits and their associated flattenings.  In the empirical setting, we do not observe the flattening matrices, but these matrices can be estimated using the counts of the observed site patterns in the $R$ data points.  The question of interest is then which of the three possible splits gives a flattening that is closest to a rank 10 matrix.  The relevant rank is 10, because $\kappa = 4$ for DNA sequence data and  $\binom{4+1}{2}=10$. To assess this, we compute the distance to the nearest rank 10 matrix in the Frobenius norm by defining the SVD score  for a split $L_1|L_2$, {\em SVD}$(L_1|L_2)$, to be 
\begin{equation}
\text{{\em SVD}}(L_1|L_2) =  \sqrt{\sum_{i=11}^{16} \hat{\sigma}_i^2}
\end{equation}
\noindent where $\hat{\sigma_i}$ are the singular values of ${Flat_{L_1|L_2}( \hat{P} )}$, for $i \in \{11, \dots, 16\}$, and ${Flat_{L_1|L_2}( \hat{P} )}$ refers to the flattening matrix for the split $L_1|L_2$ estimated from the data.  The split yielding the smallest score of the three possible splits can be inferred to be the valid split for the collection of four taxa.  We show \cite{chifmankubatko2014} that the SVD score has good ability to infer the correct split under a variety of conditions.  

We note, also, that although the results presented in this paper apply to unlinked SNP data, many existing data sets consist of collections of genes for which the entire DNA sequence is available for each species. This setting is different than that considered here, in that each site in the DNA sequence for a specific gene is generally assumed to have arisen from the {\itshape same} gene tree.  When the number of genes is large, these multi-locus data sets will approximately satisfy the model, and we expect that the inference method proposed here will still perform well.  We tested this using simulation \cite{chifmankubatko2014} and found that the method was still very effective at distinguishing the valid split.

To estimate the entire species tree, our proposed algorithm works by sampling collections of four taxa at random from those included in the data set. For each sampled collection of four taxa, the valid split is inferred by computation of the SVD score for the three possible splits.  The collection of valid four-taxon splits can then be used in a quartet assembly algorithm to construct the species tree.  Many possible algorithms for quartet assembly have been proposed \cite{strimmervonhaessler1996,strimmeretal1997,snirrao2012}. We find that the method of Snir and Rao (2010) \cite{snirrao2012} is a fast and effective technique for constructing the species tree estimate from the collection of inferred splits. 

Based on the good performance of this method in our initial study, we are currently working on expanding the scope of the data sets to which it can be applied. In particular, we will consider data sets for which (i) multiple individuals are sampled within a species; (ii) there are ambiguities in the observed nucleotide at a site, and  (iii) sequence data other than DNA, such as codon or amino acid data, have been collected.  Overall, the method shows good promise in addressing this important problem in phylogenetic inference under the coalescent process, and we believe that it can be effectively used for practical phylogenetic inference.

\section{Conclusions}
Within the last 15  years, numerous methods for inferring species-level phylogenies from genome-scale data have been proposed, and several are commonly used in empirical practice (e.g., \cite{liupearl2007}, \cite{heleddrummond2010}, \cite{kubatkoetal2009}, \cite{liuetal2010}, \cite{bryantetal2012}).  However, identifiability of the species tree from sequence data has never been formally established. We have shown here that the  unrooted $n$-taxon species tree topology is identifiable from the distribution of site pattern probabilities under the coalescent model with a general $\kappa$-state time-reversible substitution model for the mutational process that allows for a proportion of sites to be invariable and that allows for rate variation across sites under a discrete gamma model.  This is an important step in understanding phylogenetic estimation under more complicated models of DNA sequence evolution, and it has already led to the development of a promising new method for inference \cite{chifmankubatko2014}.

Several important problems in this area remain to be  addressed. For example, open questions include whether the root of the species tree is identifiable, and whether other parameters in the species tree (e.g., branch lengths and effective population sizes) and in the substitution model (e.g., parameters in the transition probability matrix) are identifiable under the model. We consider these questions in future work.

\section{Acknowledgements}
We thank Elizabeth Allman for helpful comments on an earlier draft of this manuscript, and an anonymous reviewer who noticed that our method of proof covered the case of models for site-specific rate variation. We acknowledge support from the National Science Foundation under award  DMS-1106706 (to J. C. and L. K.); from the Mathematical Biosciences Institute at The Ohio State University (J. C. and L. K.); and from NIH Cancer Biology Training Grant T32-CA079448 at Wake Forest School of Medicine (J.C.). Both authors contributed equally to this work.

\newpage
\pagestyle{empty}

\bibliographystyle{elsarticle-num}
\bibliography{bibfile}

\includepdf[pages={1-7}]{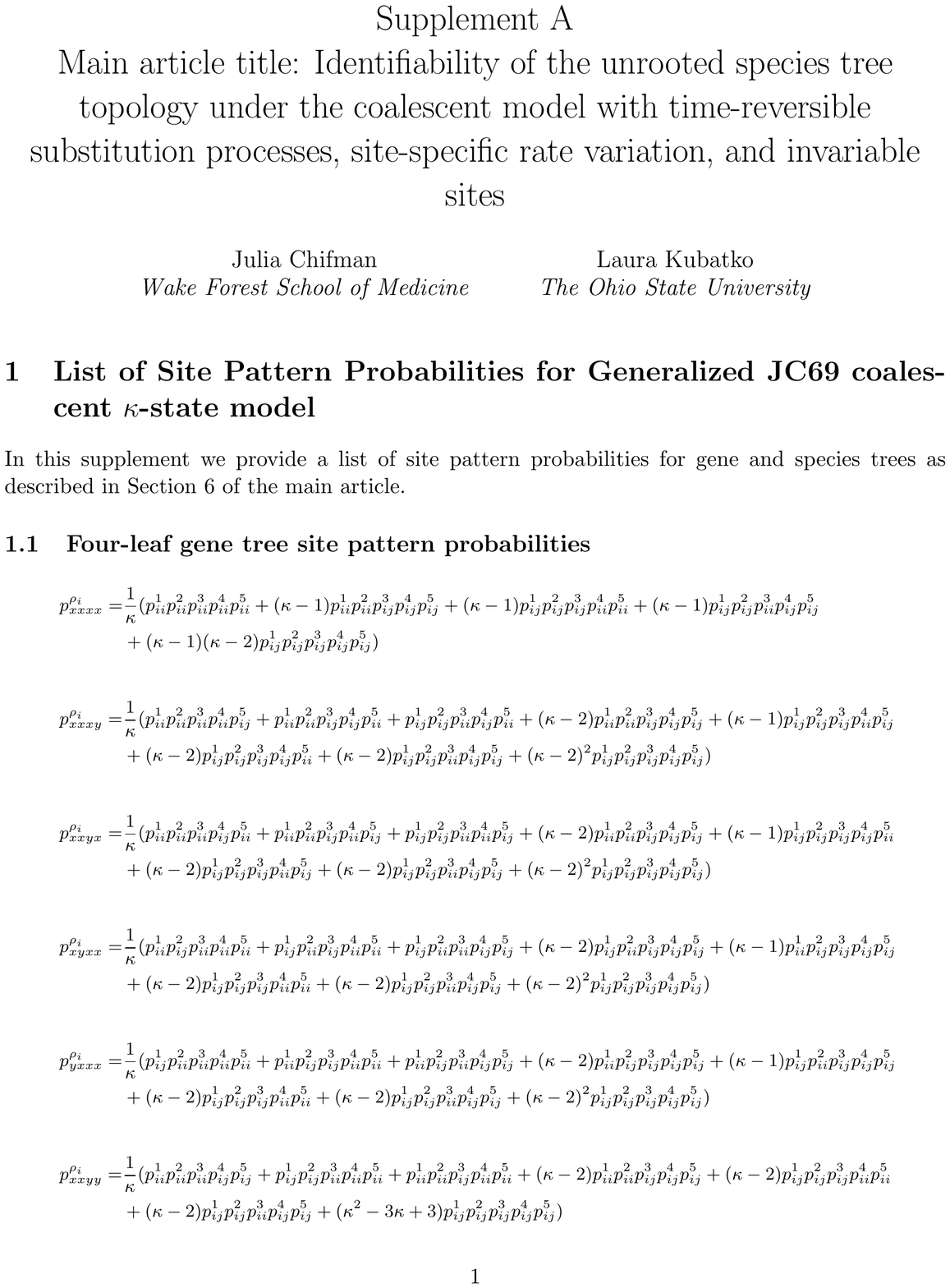}
\includepdf[pages={1-20}]{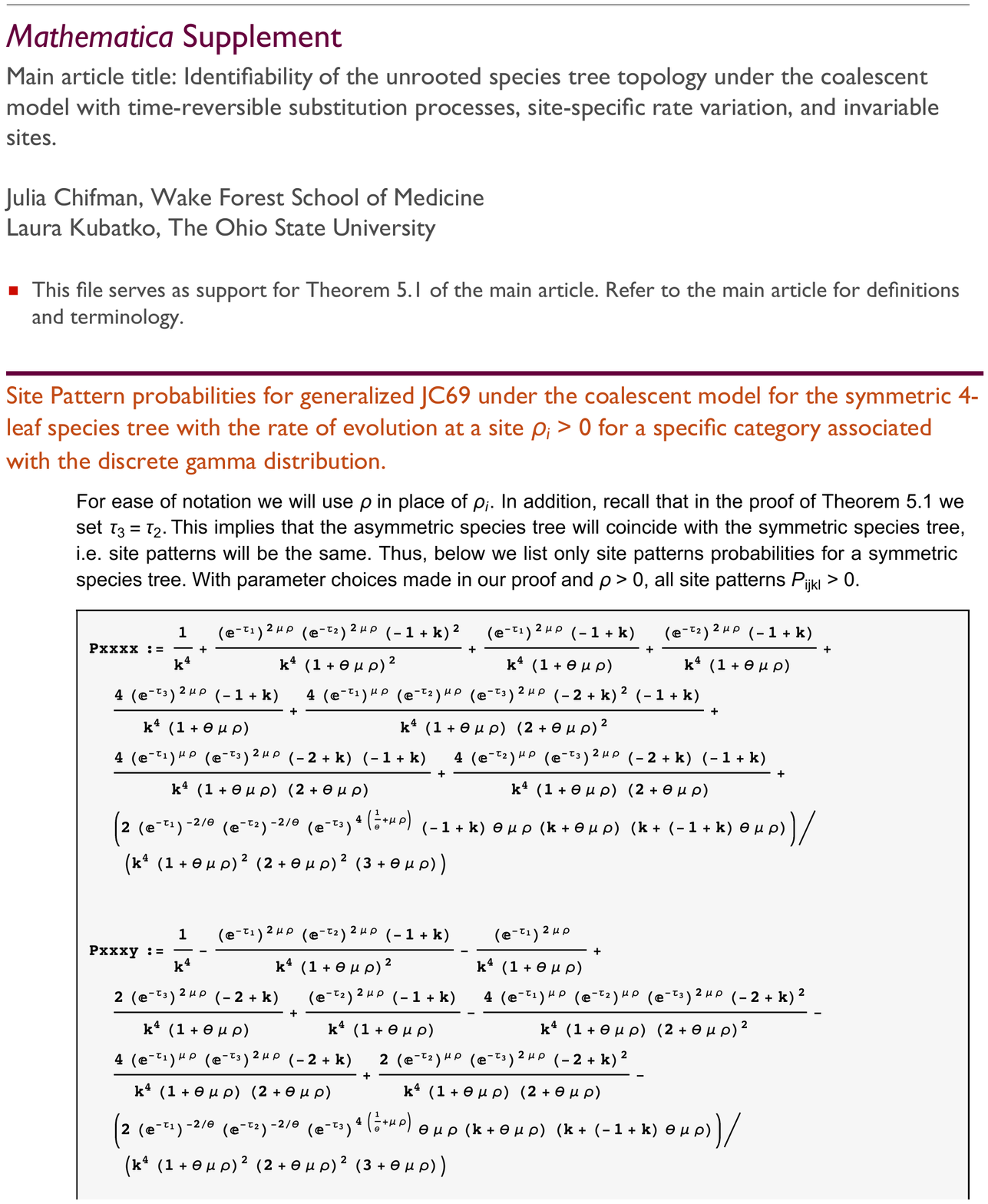}

\end{document}